\newtheorem{definition}{Definition}
\newtheorem{problem}{Problem}
\newtheorem{example}{Example}
\newtheorem{theorem}{Theorem}[section]
\newtheorem{lemma}[theorem]{Lemma}
\crefname{section}{Section}{Section}
\Crefname{section}{Section}{Section}
\crefname{figure}{Figure}{Figure}
\Crefname{figure}{Figure}{Figure}
\crefname{table}{Table}{Table}
\Crefname{table}{Table}{Table}
\newcommand{\cd}[1]{\texttt{\small #1}\xspace}
\newcommand{\rev}[1]{\textcolor{black}{#1}\xspace}
\newcommand{\proimg}{\xspace{\sf declace}\xspace}
\newcommand{\atterm}{\textit{@}-term\xspace}
\def\BibTeX{{\rm B\kern-.05em{\sc i\kern-.025em b}\kern-.08em
    T\kern-.1667em\lower.7ex\hbox{E}\kern-.125emX}}
\definecolor{first_color_plot}{HTML}{F30522}
\definecolor{second_color_plot}{HTML}{20D525}
\definecolor{third_color_plot}{HTML}{DED712}
\definecolor{fourth_color_plot}{HTML}{67A9CF}
\definecolor{orange_plot}{HTML}{FA5F22}
\definecolor{pink_plot}{HTML}{FFD0B4}
\definecolor{light_blue_plot}{HTML}{D1E5F0}
\definecolor{dark_blue_plot}{HTML}{2166AC}
\newcommand{\textwidthresizebox}{0.40}
\newcommand{\alignedimgsresizebox}{0.85}
\tikzstyle{start} = [circle,draw=black]
\tikzstyle{process} = [rectangle, 
\tikzstyle{wf_node} = [rectangle, 
\tikzstyle{decision} = [diamond, 
\tikzstyle{arrow} = [thick,->,>=stealth]
\begin{document}

\title[Continuous reasoning for \rev{adaptive}  container image \rev{distribution} in the cloud-edge continuum]{Continuous Reasoning for Adaptive Container Image Distribution in the Cloud-edge Continuum\footnote{This preprint has not undergone peer review or any post-submission
improvements or corrections. The Version of Record of this article is published in Springer's Cluster Computing, and
is available online at \url{https://doi.org/10.1007/s10586-025-05253-9}}}








\author[1]{Damiano Azzolini}
\equalcont{All Authors contributed equally to this study.}
\affil[1]{
    \orgdiv{Department of Environmental and Prevention Sciences},
    \orgname{University of Ferrara},
    \orgaddress{
        \city{Ferrara}, 
        \state{Italy}
    }
}

\author[2]{Stefano Forti}
\equalcont{All Authors contributed equally to this study.}
\affil[2]{
    \orgdiv{Department of Computer Science},
    \orgname{University of Pisa},
    \orgaddress{
        \city{Pisa},
        \state{Italy}
    }
}

\author[3]{Antonio Ielo}
\equalcont{All Authors contributed equally to this study. }
\affil[3]{ 
\orgdiv{Deparment of Mathematics and Computer Science}, 
\orgname{University of Calabria},
\orgaddress{           
    \city{Rende},
    \state{Italy}
    }
}


\abstract{Cloud-edge computing requires applications to operate across diverse infrastructures, often triggered by cyber-physical events. Containers offer a lightweight deployment option but pulling images from central repositories can cause delays. This article presents a novel declarative approach and open-source prototype for replicating container images across the cloud-edge continuum. Considering resource availability, network QoS, and storage costs, we leverage logic programming to (\textit{i}) determine optimal initial placements via Answer Set Programming (ASP) and (\textit{ii}) adapt placements using Prolog-based continuous reasoning. We evaluate our solution through simulations, showcasing how combining ASP and Prolog continuous reasoning can balance cost optimisation and prompt decision-making in placement adaptation at increasing infrastructure sizes. }

\keywords{Adaptive Container Image Placement, Cloud-Edge Computing, Continuous Reasoning, Logic Programming}



\maketitle

\section{Introduction}
\label{sec:intro}

Cloud-edge computing paradigms (e.g., fog~\cite{DBLP:journals/access/NahaGGJGXR18} and edge~\cite{DBLP:journals/fgcs/KhanAHYA19} computing) have been getting increasing attention both from academia and industry as a solution to deal with the considerable amounts of data coming from the Internet of Things (IoT)~\cite{DBLP:journals/fgcs/TortonesiGMRSS19}.
All such paradigms aim at placing (part of the) computation suitably closer to where IoT data are produced to aggregate and filter them along a pervasive continuum of heterogeneous infrastructure resources at the interplay between IoT devices and cloud datacentres.
Such an approach requires to deploy application components (e.g., microservices and serverless functions) in a context- and QoS-aware manner in order to improve their response times, data-locality, energy demand, security, and trust~\cite{DBLP:journals/csur/Ait-SalahtDL20}. Indeed, cloud-edge computing is particularly well-suited for latency-sensitive and bandwidth-hungry IoT applications that require prompt reactions to sensed cyber-physical events, e.g., virtual reality, remote surgery, and online gaming~\cite{DBLP:journals/logcom/FortiBB22}.

Much research (e.g., surveyed in \cite{DBLP:journals/csur/MahmudRB20}) has focussed on the decision-making process related to placing application components onto cloud-edge resources according to a large variety of constraints (e.g., hardware, IoT, and QoS) and optimisation targets (e.g., operational cost, resource usage, and battery lifetime), also in response to cyber-physical stimuli. Many proposals foresee deploying application components through containers, being a lightweight and scalable virtualisation technology that runs both on powerful cloud servers and resource-constrained edge devices~\cite{DBLP:journals/jsa/ShakaramiSGNM22,DBLP:journals/jss/SoldaniTH18}.
\rev{To run a container, one must first pull its image from the \textit{repository} stored in an \textit{image registry}, i.e., a server that acts as a distribution hub for container images, allowing users to push and pull images to and from the repository.} 

However, as highlighted in~\cite{DBLP:conf/icccn/DarrousLI19,CLOUD2023}, very few works considered the problems related to distributing container image repositories over cloud-edge \rev{registries} to further improve application performance by reducing \rev{application} start-up times. 
On one hand, downloading images from a central cloud repository is not always viable in short times, especially for devices closer to the edge of the Internet that might incur intermittent connectivity or scarce bandwidth availability. On the other hand, \rev{some} edge application components require fast start-ups, being deployed to the task in response to cyber-physical events that trigger them, or to specific end-users mobility patterns. This is especially true in \rev{Function-as-a-Service} edge settings, where short-running containerised stateless functions run only when needed~\cite{DBLP:conf/acsw/AslanpourTCJSTA21}.

Hereinafter, we precisely consider the possibility of distributing container images across cloud-edge \rev{registry} nodes to speed up on-demand image downloads to any other node and, therefore, containers' start-up times.  In the above settings, deciding onto which infrastructure nodes to place different container images is an interesting and challenging problem~\cite{DBLP:conf/icccn/DarrousLI19,CLOUD2023}. Image placement should guarantee that images can be downloaded from any infrastructure node within a specific time frame. 
In doing so, it should replicate images onto a subset of infrastructure nodes in a QoS-aware and context-aware manner, without exceeding the capacity of each node and containing operational costs related to leasing the needed storage resources. Indeed, if relying on a central cloud repository might suffer from too high transfer times, replicating all images on all available nodes might be economically unsustainable or unfeasible due to edge resource constraints.

The above scenario is characterised by large-scale deployments, high network churn, and frequent image repository updates. Decision-making should scale to quickly distribute tens of images over hundreds of cloud-edge \rev{registry} nodes, encompassing their high heterogeneity. Edge nodes can temporarily join and leave the network either due to failures (e.g., power disruption, node crash, reboot) or to network QoS degradation (e.g., network congestion, increased jitter due to node mobility)~\cite{DBLP:journals/fgcs/FortiGB21}.

Image repositories are continuously updated by developers as soon as new features are released or newly discovered bugs are fixed. As for 2018, operating system (OS) images such as Ubuntu and Alpine received monthly updates, while non-OS images such as Nginx and MongoDB received up to ten updates per month~\cite{dockerupdates}.
Hence, decision-making should run continuously to adapt image placements in response to changes in the infrastructure (e.g., node/link crashes, link QoS degradation) or in the image repository to be distributed (e.g., image addition/removal, increase of storage requirements), that may \rev{negatively affect image transfer times.} 
\rev{Hence,} adaptation should also migrate as few images as possible to avoid unnecessary data transfers. To the best of our knowledge,
\textit{no prior work exists tackling these adaptation aspects}.

In this article, we propose a novel declarative methodology -- and its open-source prototype\footnote{Freely available at \url{https://github.com/teto1992/declace}.} \proimg{} -- to solve the problem of adaptively distributing base container images along a cloud-edge infrastructure in a QoS-, context- and cost-aware manner. 
Our prototype extensively relies on \textit{logic programming}. It combines a cost-optimal image placement strategy, written in Answer Set Programming (ASP), to determine initial image placements and a heuristic iterative deepening placement strategy, written in Prolog, to adapt such initial placements via continuous reasoning\footnote{Originally proposed at Meta, continuous reasoning speeds up static analyses over Facebook's shared codebase by only focussing on incrementally analysing the latest changes~\cite{DBLP:journals/cacm/DistefanoFLO19}. More recently, Forti et al.~\cite{DBLP:journals/logcom/FortiBB22} and Herrera et al.~\cite{DBLP:journals/computing/HerreraBFBM23} extended the benefits of continuous reasoning to cloud-edge application management, adapting application deployments by possibly migrating only services in need for attention due to infrastructure or application changes.} when needed. Thanks to the declarative nature of logic programming, \proimg{} is more \textit{concise} ( $\simeq100$ lines of code), \textit{readable}, and \textit{extensible} compared to imperative solutions.  

While initial placements tolerate the longer decision-making times of our ASP approach in favour of minimising operational placement costs (i.e., it solves the problem by finding the \textit{global} minimum), runtime adaptation calls for prompt decision-making to minimise disruption of the image distribution service, even at the price of sub-optimal placements.
By employing heuristic \textit{continuous reasoning} to solve the container image placement problem at runtime, this work enables the 
adaptation of distributed image repositories to changing cloud-edge environments and image requirements. 
As we will show through our experiments, continuous reasoning reduces the size of the problem instances to be solved at runtime and contains the number of images to be migrated when adapting to changing conditions. This permits handling various infrastructure sizes and speeds up decision-making.

The rest of this article is organised as follows. After formally defining the decision and optimisation problems of interest for this work (Section~\ref{sec:problemstatement}), we detail the Prolog and ASP solutions to such problems and how they are combined into the \proimg{} prototype tool (Section~\ref{sec:methodology}). We then describe the results of the experimental assessment of \proimg{} through simulation over lifelike data (Section~\ref{sec:experiments}). Finally, after discussing some \rev{closely} related work (Section~\ref{sec:related}), we conclude by pointing to directions for future work (Section~\ref{sec:conclusions}).

\section{Considered Problem}
\label{sec:problemstatement}

\rev{In this section, we illustrate a formal model of the considered cloud-edge settings and formally define the problems we will tackle, while proving their NP-hard nature. }
%

\subsection{Formal Model}
\label{sec:formalmodel}

\noindent
First, we model \rev{base} container images\footnote{\rev{We here focus on \textit{base images} as they are needed to define and build any other container image. 
}} as follows.

\smallskip\begin{definition}\label{def:image}
A {\em container image} is a triple $\langle i, s, m \rangle$ where $i$ denotes the unique image identifier, $s$ the size of the image in Megabytes (MB), and $m$ the maximum amount of time in seconds tolerated to download such image onto a target node. 
\end{definition}

 \smallskip\begin{example}\label{ex:images} The latest version\footnote{Image sizes are taken from Docker by pulling the images with \cd{latest} tag as of June 2024.} of the \cd{alpine}, \cd{ubuntu} and \cd{nginx} images are modeled by the following triples: 
%
$$\langle \cd{alpine}, 8\ \textrm{\em MB}, 30\ \textrm{\em s} \rangle, \langle \cd{ubuntu}, 69\ \textrm{\em MB}, 60\ \textrm{\em s} \rangle, \langle \cd{nginx}, 192\ \textrm{\em MB}, 120\ \textrm{\em s} \rangle$$

\noindent enforcing that \cd{alpine} can be downloaded from any network node within 30 seconds, \cd{ubuntu} within 60 seconds, and \cd{nginx} within 120 seconds.\hfill $\qed$
\end{example}
\smallskip

We model registry nodes, their available storage and storage cost, and the Quality of Service of end-to-end links between them in terms of latency and bandwidth. \rev{We assume that the considered nodes host a container image registry, i.e., they can be used to distributed container image repositories across the cloud-edge network.}

\smallskip
\smallskip\begin{definition}\label{def:node}
A {\em registry node} is a triple $\langle n, h, c \rangle$ where $n$ denotes the unique node identifier (e.g., its IP address), $h$ the available amount of storage it features expressed in MB, and $c$ the monthly cost per used unit of storage (MB) at such node.
\end{definition}

\smallskip
\smallskip\begin{definition}\label{def:link}
A {\em link} is a 4-tuple $\langle n, n', \ell, b \rangle$ where $\ell$ and $b$ denote the end-to-end latency in milliseconds and bandwidth in Mbps, respectively, between the nodes identified by $n$ and $n'$.
\end{definition}

 \smallskip\begin{example}
\label{ex:running_example}

Consider the cloud-edge infrastructure of Figure~\ref{fig:exampleInfrastructure}, composed of one cloud and five edge nodes. As an example, the \cd{cloud} node can be modelled via the tuple
$$\langle \cd{cloud}, 2048\ \textrm{\em MB},  0.9\ \text{\euro/MB} \rangle$$

For the sake of readability, Figure~\ref{fig:exampleInfrastructure} only shows direct (wired or wireless) links between nodes. We assume that all other end-to-end links are computed by selecting shortest paths based on latency, and that the end-to-end bandwidth corresponds to the minimum bandwidth\footnote{It is worth noting that, even though we assume symmetric links in this example, our solution also works with asymmetric bandwidth and latency combinations between pairs of nodes.} along the path. Then, the links between \cd{cloud} and \cd{edge1} and between \cd{cloud} and \cd{edge5} can be modelled as follows
$$\langle \cd{cloud}, \cd{edge1}, 20\ \textrm{\em ms},  50\ \textrm{\em Mbps} \rangle, \langle \cd{cloud}, \cd{edge5}, 55\ \textrm{\em ms},  5\ \textrm{\em Mbps} \rangle$$

\noindent with traffic from \cd{cloud} to \cd{edge5} routed through \cd{edge2}. \hfill$\qed$

\end{example}

\begin{figure}[thb]
\centering
\begin{tikzpicture}[
>=stealth',
shorten >=1pt,
node distance=3cm,
semithick,
every text node part/.style={align=center},
scale=0.66
]

  \node[state] (cloud) at (0,9) {{\footnotesize \textbf{cloud}} \\ {\footnotesize 1 TB} \\ {\footnotesize \euro 0.7}};
  \node[state] (edge1) at (-3, 5.5)  {\footnotesize  \textbf{edge1} \\ \footnotesize  64 GB \\ \footnotesize  \euro 0.2};
  \node[state] (edge2) at (3, 5.5) {\footnotesize  \textbf{edge2} \\ \footnotesize  32 GB \\ \footnotesize \euro 0.7};
  \node[state] (edge5) at (3, 2) {\footnotesize  \textbf{edge5} \\ \footnotesize  2 GB \\ \footnotesize \euro 0.9};
  \node[state] (edge3) at (-5,2) {\footnotesize  \textbf{edge3} \\ \footnotesize  8 GB \\ \footnotesize \euro 0.4};
  \node[state] (edge4) at (-1,2) {\footnotesize  \textbf{edge4} \\ \footnotesize  4 GB \\ \footnotesize \euro 0.6};

  \draw (cloud) edge node[above,sloped] {\footnotesize 50 Mbps\\\footnotesize 20 ms} (edge1)
        (cloud) edge node[below,sloped] {\footnotesize 70 Mbps\\ \footnotesize 25 ms} (edge2)
        (edge1) edge node[below,sloped] {\footnotesize 100 Mbps\\\footnotesize 5 ms}(edge2)
        (edge1) edge node[above,sloped] {\footnotesize 10 Mbps\\ \footnotesize 10 ms}(edge3)
        (edge1) edge node[below,sloped] {\footnotesize 30 Mbps\\ \footnotesize 10 ms} (edge4)
        (edge2) edge node[right] {\footnotesize 5 Mbps\\ \footnotesize 30 ms} (edge5);
\end{tikzpicture}
\medskip
\caption{Example infrastructure.}
\label{fig:exampleInfrastructure}
\end{figure}
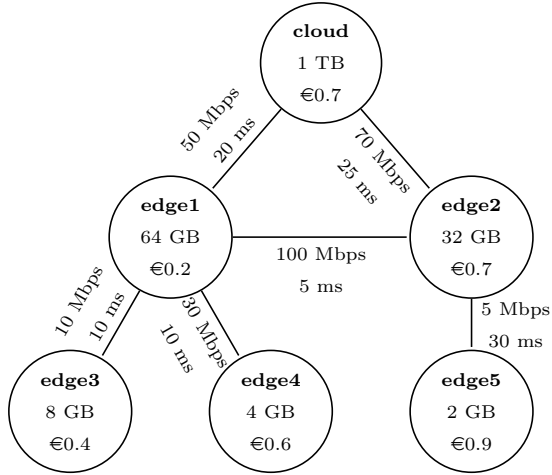


It is \rev{now} possible to formally define image placements. Note that one image can be stored onto multiple nodes, i.e., we foresee the possibility to replicate an image onto multiple infrastructure nodes.

\smallskip\begin{definition}
Let $I$ be a set of container images and $N$ be a set of available infrastructure nodes. An {\em image placement} is a mapping $p: I' \rightarrow \mathcal{P}(N)$ that maps each image of $I' \subseteq I$ to a subset\footnote{$\mathcal{P}(N)$ denotes the power set of $N$, i.e., the set of all possible subsets of $N$.} of infrastructure nodes $\overline{N}\subseteq N$. If $I'=I$, then the placement is {\em total}, it is {\em partial} otherwise.
\end{definition}

 \smallskip\begin{example}\label{ex:placement}
Considering the images of Example~\ref{ex:images} as $I$ and the infrastructure of Example~\ref{ex:running_example} as $N$ and $L$, a possible \emph{partial} placement $p'$ is defined by:

$$p'(\cd{alpine}) = \{\cd{cloud, edge2}\} \ \ \ \ p'(\cd{ubuntu}) = \{\cd{edge5}\} $$

\smallskip\noindent as a placement is not defined for image \cd{nginx}. 
Thus, a \emph{total} placement $p$ is defined by:
$$p(\cd{alpine}) = \{\cd{edge2}\} \ \ p(\cd{ubuntu}) = \{\cd{edge2, edge5}\} \ \ p(\cd{nginx}) = \{\cd{edge2, edge3, edge5}\} $$

\noindent since it maps all images onto a target node.\hfill$\qed$
\end{example}\smallskip

Assuming a cap on the number of replicas for all images, we \rev{now} define the eligibility criteria for image placement. An eligible image placement must (1)~not exceed the cap on the number of replicas for any placed image, (2)~enable all infrastructure nodes to download any image within its associated maximum time frame (including transmission and propagation times) or to have it already stored, and (3)~not exceed storage capacity at any infrastructure node. These conditions correspond to the three conditions in the following definition of eligible placement.

\smallskip\begin{definition}\label{def:eligiblePlacement}
Let $I$ be a set of container images, $N$ be a set of available infrastructure nodes, $p: I' \rightarrow \mathcal{P}(N)$ be an image placement over $I'\subseteq I$, $R \in \mathbb{N}$ be a maximum image replica factor, and $L$ be the set of links between nodes in $N$. The image placement $p$ is an {\em eligible placement} over $I'$ if and only if:
\begin{equation}\label{eq:maxreplicas}
    \forall i \in I'. \quad |p(i)| \leq R
\end{equation}
\begin{align}\label{eq:transfertimes}
\forall \langle n, &\_, \_ \rangle  \in  N.\  \forall  \langle  i, s, m \rangle  \in  I'.\ \exists \ \langle n',\_,\_\rangle \in p(i)\ : \nonumber \\ 
  & n=n'\ \lor\ ( \langle n', n, \ell, b \rangle \in L\ \wedge\ 8\times s/b + \ell/1000 \leqslant m )
  \end{align}
\begin{align}\label{eq:hardwarelimit}
\forall \langle n, h, \_ \rangle \in N.\ I'_n = \{\  \langle i, s, \_ \rangle \in I'\ |\ p(i) &= n\ \} \Rightarrow \sum_{\langle \_, s,\_\rangle \in I'_n}s \leqslant h
  \end{align}
\end{definition}

Note that Definition~\ref{def:eligiblePlacement} is incremental and enables reasoning on partial eligible image placements while extending them to total eligible placements. Such an aspect is crucial for continuous reasoning and will be key to the Prolog code presented next. 
The $8$ constant factor in Equation~\ref{eq:transfertimes} is needed to convert $b$'s Mbps to MB/ms and correctly sum them with the milliseconds of $\ell$, converted to seconds by dividing by 1000.

Eligible placements can then be associated with their monthly cost as follows.

\smallskip\begin{definition}
Let $I$ be a set of container images, $N$ be a set of available infrastructure nodes, and $p: I \rightarrow \mathcal{P}(N)$ be an eligible image placement. The {\em cost} of $p$ is defined as:

$$cost(p) = \sum_{\langle \_, s, \_ \rangle \in I}\ \sum_{\langle \_, \_, c\rangle \in p(i)} s \times c$$
\end{definition}

Lastly, we provide a notion of optimality of an eligible placement based on its associated cost.

\smallskip\begin{definition}\label{def:costOptimal}
An eligible image placement $p: I \rightarrow \mathcal{P}(N)$ is {\em cost-optimal} if no other eligible image placement $p': I \rightarrow \mathcal{P}(N)$ exists such that $p \neq p' \ \land\ cost(p') < cost(p)$.
\end{definition}

 \smallskip\begin{example}\label{ex:eligibleplacement}
Retaking Example~\ref{ex:placement} and setting $R=3$, the first partial placement $p'$ is not eligible as, for instance, it is not possible to transfer image \cd{ubuntu} from node \cd{edge5} to node \cd{cloud} within 60 seconds (as it requires over 110 seconds). Readers can instead verify that the total placement $p$ of Example~\ref{ex:placement} is eligible as it meets all conditions {\em(1)}, {\em(2)}, and {\em(3)} of Definition~\ref{def:eligiblePlacement}. Besides, as we will programmatically check in Section~\ref{sec:methodology}, placement $p$, with a cost of \euro 308, is also cost-optimal in the described scenario. \hfill$\qed$
\end{example}

\subsection{Problem Statements and Complexity}
\label{sec:consideredproblems}
In this section we formally define the problems we address in this paper.
The first (decision) problem we consider in this article is precisely how to determine eligible image placements in Cloud-Edge landscapes. We define it as follows.

\smallskip
\begin{problem}[Image Placement, IPP]
\label{problem:image_placement}
An instance of the image placement problem is a 4-tuple $\langle N, L, I, R \rangle$ where 
$N$ is a set of available infrastructure nodes, 
$L$ is the set of end-to-end links between such nodes, 
$I$ is a set of container images, and 
$R$ is a maximum replica factor. Determine (if it exists) a total eligible placement $p: I \rightarrow \mathcal{P}(N)$
 as per {\em Definition~\ref{def:eligiblePlacement}}.
\end{problem}

\smallskip
We prove that Problem~\ref{problem:image_placement} is NP-hard by sketching a polynomial time reduction of the NP-hard~\cite{DBLP:books/fm/GareyJ79} problem of bin-packing (BPP) to IPP. Bin packing can be stated as follows. 
Let $O$ be a finite set of items each of size $s(o) \in \mathbb{N}$. Determine if there exists a partition of items in $O$ over $K\in \mathbb{N}$ bins without exceeding their capacity $B\in \mathbb{N}$. 

\smallskip
\begin{lemma}
\label{th:IPP_complexity}
IPP $\in$ NP-hard. 
\end{lemma}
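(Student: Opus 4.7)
The plan is to exhibit a polynomial-time reduction from the classical bin-packing problem (BPP) to IPP, so that any polynomial algorithm for IPP would yield one for BPP.  Given a BPP instance $\langle O, s, K, B \rangle$, I would construct an IPP instance $\langle N, L, I, R \rangle$ as follows.  For each of the $K$ bins introduce a node $n_k = \langle k, B, 1 \rangle$, i.e.\ a registry node with storage capacity equal to the bin capacity $B$ and an arbitrary (say unit) storage cost.  For each item $o \in O$ introduce a container image $\langle o, s(o), m \rangle$, where $m$ is chosen large enough to make the transfer-time constraint~\eqref{eq:transfertimes} vacuous (taking $m := 8\cdot\max_{o\in O} s(o) + 1$ suffices).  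Equip $L$ with a clique of links $\langle n_k, n_{k'}, 0, 1 \rangle$ for all $k \neq k'$, so that any image of size $s(o)$ can be transferred between any two nodes in at most $8 \cdot s(o)$ seconds $< m$.  Finally, set the replica cap $R := 1$.  This construction is clearly of size polynomial (quadratic) in the BPP input.

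Next I would establish the equivalence of the two instances.  Since every image must be downloadable from every node within time $m$, condition~\eqref{eq:transfertimes} forces $p(i) \neq \emptyset$ for each image $i$; combined with $R=1$, this means each image is placed on exactly one node, mirroring the bin-packing requirement that each item belongs to exactly one bin.  Condition~\eqref{eq:hardwarelimit} then coincides verbatim with the bin-capacity constraint $\sum_{o \in b} s(o) \leqslant B$ for each bin $b$.  Thus a total eligible placement $p$ exists if and only if there is a partition of $O$ into $K$ groups each of total size at most $B$, i.e.\ if and only if the BPP instance admits a solution.

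The verification that both directions of this equivalence hold is routine.  Given a packing of items into bins, the induced placement (image $i_o$ on the node corresponding to the bin containing $o$) trivially satisfies all three conditions of Definition~\ref{def:eligiblePlacement}.  Conversely, given an eligible placement, the partition assigning each item $o$ to the unique bin holding its image is feasible by~\eqref{eq:hardwarelimit}.

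The only delicate point, and the one I would foreground in the proof, is the choice of link parameters and of $m$ that makes~\eqref{eq:transfertimes} trivially satisfied: otherwise the IPP instance could become infeasible for reasons unrelated to bin packing, breaking the reduction.  Once this is done and it is observed that the construction is computable in time polynomial in $|O| + \log B + \log K$, NP-hardness of IPP follows from the known NP-hardness of BPP.
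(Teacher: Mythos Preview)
Your proposal is correct and follows essentially the same reduction as the paper: bins become nodes of capacity $B$, items become images with $R=1$, and the link/time parameters are chosen so that constraint~\eqref{eq:transfertimes} is vacuous, leaving only the storage constraint~\eqref{eq:hardwarelimit} to mirror the bin-capacity condition. The only cosmetic difference is that the paper sets bandwidth and maximum transfer time to $\infty$ (and cost to $0$), whereas you pick explicit finite values ($b=1$, $m = 8\cdot\max_o s(o)+1$); your choice is arguably cleaner since it stays within the natural input domain, but the underlying argument is identical.
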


\begin{proof}
Given an instance $\langle O, K, B \rangle$ of BPP it is possible to reduce it to an instance $\langle N, L, I, R \rangle$ of IPP as follows:
\begin{enumerate}
    \item populate $N$ with $K$ nodes of the form $\langle k, B, 0 \rangle$ with $k = 1,2,\dots,K$, size $B$, and null cost,
    \item populate $L$ with a complete set of end-to-end links between pairs of nodes $i,j \in N$ of the form $\langle i, j, 0, \infty \rangle$, so that nodes reach each other with null latency and infinite bandwidth, 
    \item populate $I$ with a container image $\langle o, s(o), \infty \rangle \in I$ for each object $o \in O$, where the size of the object becomes the size of the image and the maximum tolerated transfer time is infinite, and
    \item set the maximum replica factor $R$ to $1$, i.e. placing each image to a single node only.
\end{enumerate}
\noindent
Determining a solution to the above instance of IPP would partition images in $I$ (i.e., item in $O$) over $K$ nodes (i.e., bins) without exceeding their capacity $B$. Thus, it would solve the original instance of BPP.
The above steps reduce an instance of BPP to an instance of IPP, incurring a polynomial time complexity of $O(K^2)$, due to step 2, which requires creating a link for each pair of nodes in $N$.
This proves that IPP is at least as difficult as BPP, which is known to be NP-hard. Hence, IPP is also NP-hard. 
\end{proof}

Such a result means that no poly-time algorithm is currently known to solve IPP (unless $P=NP$ is proven). However, checking whether a placement $p$ is eligible, according to Definition~\ref{def:eligiblePlacement}, can be performed naively in $O(|I|N^{2})$. This consideration along with the previous lemma proves that IPP $\in$ NP-complete.
The second problem we will be studying is to continuously adapt a previously eligible total placement whenever changes in the underlying network topology or image requirements occur that affect its eligibility. 

\smallskip
\begin{problem}[Continuous Image Placement, CIPP]
\label{problem:continuous_image_placement}
Let $\langle N,L,I,R\rangle$ and $\langle N',L',I',R' \rangle$ be two image placement problem instances and $p: I \rightarrow \mathcal{P}(N)$ be a total eligible placement for the first problem instance. Let $I_{OK}$ be a maximal proper subset of $I'$ such that $p$ is eligible over $I_{OK}$, and $I_{KO} = I' \setminus I_{OK}$. Determine an eligible total placement $p': I' \rightarrow \mathcal{P}(N')$ (if any) such that $p'(i) = p(i), \forall i \in I_{OK}$. 
\end{problem}\smallskip

Said otherwise, given infrastructure or image requirements changes that make a previously total eligible placement $p$ only partially eligible, we aim at extending $p$ over $I_{OK}$ into a total eligible placement $p'$ over $I$, by (possibly) only modifying the placement of images in $I_{KO}$.

Last, following Definition~\ref{def:costOptimal}, we can easily define the optimisation problem corresponding to Problem~\ref{problem:image_placement}.

\smallskip
\begin{problem}[Optimal Image Placement, OIPP]
\label{problem:optimal_image_placement}
Let $\langle N,L,I,R\rangle$ be a placement problem instance, determine whether it exists an eligible cost-optimal image placement $p: I \rightarrow \mathcal{P}(N)$
 as per {\em Definition~\ref{def:eligiblePlacement}} and {\em Definition~\ref{def:costOptimal}}.
\end{problem}\smallskip

\noindent As solving CIPP and OIPP implies solving IPP, Problems \ref{problem:continuous_image_placement}  and \ref{problem:optimal_image_placement} are also NP-hard.

\section{Methodology}
\label{sec:methodology}

\rev{In this section we propose declarative solutions to all the problems introduced in Section~\ref{sec:kr}. In particular, we will detail a knowledge representation for our problems, two Prolog iterative deepening solutions to solve IPP and CIPP, respectively, and an ASP encoding to efficiently solve OIPP. Summing up, we show how to combine Prolog continuous reasoning with the ASP encoding to implement adaptive QoS-, context- and cost-aware container image distribution in cloud-edge landscapes.} We assume readers to be familiar with basics of logic programming. For a complete treatment of the Prolog and ASP, we refer them to~\cite{sterling1994art} and~\cite{DBLP:books/sp/Lifschitz19}.

\subsection{Knowledge Representation}
\label{sec:kr}

We encode a problem instance as a set of facts over the predicates \cd{node/3}, \cd{image/3}, and \cd{link/4}, which can be mapped one to one to Definitions~\ref{def:image}, \ref{def:node}, and \ref{def:link} of Section~\ref{sec:problemstatement}, respectively. 
In particular, container images are denoted by facts like \cd{image(ImgId, Size, Max)},
%
%
\noindent where \cd{ImgId} is a unique image identifier, \cd{Size} is the image size expressed in MB, and \cd{Max} is the maximum tolerated download latency for the image at hand expressed in ms.
Similarly, infrastructure nodes are declared as facts of the form \cd{node(NodeId, Storage, Cost)}
%
%
\noindent where \cd{NodeId} is a unique node identifier, \cd{Storage} the available node storage capacity expressed in MB, and \cd{Cost} the unit storage cost per MB.
Links between nodes are declared as facts of the form \cd{link(NodeId1, NodeId2, Latency, Bandwidth)}
%
%
\noindent where \cd{NodeId1} and \cd{NodeId2} are the considered link endpoints, and \cd{Latency} and \cd{Bandwidth} the average end-to-end latency in ms and bandwidth in Mbps featured by such a link.
Finally, the maximum image replica factor is denoted by \cd{maxReplicas(R)}
%
%
where \cd{R} is an integer.

\medskip\begin{example}\label{ex:encoding}
A portion of the problem instance described in Examples~\ref{ex:images}--~\ref{ex:eligibleplacement} and sketched in Figure~\ref{fig:exampleInfrastructure} is encoded as follows:

\medskip\begin{code}
image(alpine,8,30).          image(ubuntu,69,60).       image(nginx,192,120).

node(cloud,1024000,0.7).     node(edge1,64000,0.7).     node(edge2,32000,0.4).
node(edge3,8000,0.5).        node(edge4,4000,0.7).      node(edge5,2000,0.4).

link(edge0,edge1,20,50).     link(edge1,edge0,20,50).    link(edge0,edge2,25,70).    
link(edge2,edge0,25,70).     link(edge1,edge2,5,100).    link(edge2,edge1,5,100).    
link(edge1,edge3,15,10).     link(edge3,edge1,15,10).    link(edge1,edge4,10,30).
link(edge4,edge1,10,30).     link(edge2,edge5,30,5).     link(edge5,edge2,30,5).

maxReplicas(3).
\end{code}

For the sake of readability, we only reported the direct links sketched in Figure~\ref{fig:exampleInfrastructure}. 

\end{example}

\subsection{Solving IPP with Prolog}
\label{sec:iterativedeepeningprolog}

Let us now discuss the Prolog encoding to solve the IPP (Problem~\ref{problem:image_placement}) task.

\smallskip\noindent\textbf{Placement Strategy} Given a (non-empty) list \cd{[I|Is]} of images to be placed, a list of candidate placement \cd{Nodes}, a maximum replica factor \cd{R}, and a partial eligible \cd{PPlacement}, predicate \cd{imagePlacement/5} (lines 1--4) determines a total eligible \cd{Placement} by relying on subsequent evaluations of predicate \cd{replicaPlacement/5} (line 2). While recurring on the image list, \cd{replicaPlacement/5} extends the previous partial placement \cd{PPlacement} into a new eligible partial placement \cd{TmpPPlacement} which also includes image \cd{I} at hand (line 3). Recursion ends when the list of images to be placed has been fully consumed, viz., it is empty (line 4).

\medskip\begin{codeNum}[firstnumber=1]
imagePlacement([I|Is], Nodes, PPlacement, Placement, R) :-
    replicaPlacement(I, Nodes, PPlacement, TmpPPlacement, R), 
    imagePlacement(Is, Nodes, TmpPPlacement, Placement, R).
imagePlacement([], _, Placement, Placement, _).
\end{codeNum}

In turn, given an image \cd{I}, list of \cd{Nodes}, a maximum replica factor \cd{R} and a partial eligible placement \cd{PPlacement}, predicate \cd{replicaPlacement/5} (lines 5--13) extends \cd{PPlacement} into a new partial eligible \cd{NewPPlacement} by adding at most \cd{R} replicas of image \cd{I}. Until the maximum replica factor has not been exceeded (line~6, viz., Equation~\ref{eq:maxreplicas} of Definition~\ref{eq:maxreplicas}) and the transfer time condition for image \cd{I} in \cd{PPlacement} does not hold (line~7, viz., Equation~\ref{eq:transfertimes} of Definition~\ref{def:eligiblePlacement}), \cd{replicaPlacement/5} selects a candidate deployment node \cd{N} (line 8) onto which \cd{I} has not been placed yet (line 9). It then checks whether node \cd{N} can support all images that it hosts as per \cd{PPlacement} and the new replica of image \cd{I} (line~10, viz., Equation~\ref{eq:hardwarelimit} of Definition~\ref{def:eligiblePlacement}). If this holds, the association \cd{at(I,N)} is added to \cd{PPlacement} in the recursive call (line 11). The recursion ends when the transfer times condition holds for image \cd{I} by returning the newly constructed eligible partial \cd{Placement} (line~13). 

\medskip\begin{codeNum}[firstnumber=5, breaklines]
replicaPlacement(I, Nodes, PPlacement, NewPPlacement, R) :-
    R > 0, NewR is R - 1,
    \+ transferTimesOk(I, Nodes, PPlacement), 
    image(I, Size, _), member(N, Nodes),
    \+ member(at(I,N), PPlacement),
    storageOk(PPlacement, N, Size),
    replicaPlacement(I, Nodes, [at(I, N)|PPlacement], NewPPlacement, NewR).
replicaPlacement(I, Nodes, Placement, Placement, _) :- 
    transferTimesOk(I, Nodes, Placement).
\end{codeNum}

Predicates \cd{transferTimesOk/3} and \cd{storageOk/3} are in charge of checking conditions imposed by Equation~\ref{eq:transfertimes} and Equation~\ref{eq:hardwarelimit} of Definition~\ref{def:eligiblePlacement}, respectively.

Given an image \cd{I}, a list of nodes \cd{[N|Ns]}, \cd{transferTimesOk/3} (lines 14--20) recursively checks that for each node in the list there exists a node \cd{M} (possibly \cd{M}=\cd{N}) from which it is possible to download \cd{I} within a transfer time of at most \cd{MaxT} (lines 17--18), as specified in the corresponding\cd{image/3} fact (line 16). Predicate \cd{transferTime/4} (line 19--24) simply computes the transfer time \cd{T} of an \cd{Image} from a source node \cd{Src} to a destination node \cd{Dst} following Equation~\ref{eq:transfertimes} of Definition~\ref{def:eligiblePlacement}. We use here the \textit{cut} operator \cd{!} (line 16). Such an operator prevents unwanted backtracking, thus avoiding determining unneeded solutions. Indeed, as per Equation~\ref{eq:transfertimes} of Definition~\ref{ex:eligibleplacement}, the availability of one image source is sufficient for each destination node.

\medskip\begin{codeNum}[firstnumber=14, breaklines, commandchars=\\\{\}]
transferTimesOk(I, [N|Ns], P) :-
    dif(P,[]), member(at(I,M),P), image(I,_,MaxT), 
    transferTime(I,M,N,T), T < MaxT, !, \textcolor{olive}{
    transferTimesOk(I, Ns, P).
transferTimesOk(_, [], _).

transferTime(Image, Src, Dst, T) :-
    dif(Src, Dst), node(Src, _, _), node(Dst, _, _),
    image(Image, Size, _),
    link(Src, Dst, Latency, Bandwidth),
    T is Size * 8 / Bandwidth + Latency / 1000.
transferTime(_, N, N, 0).
\end{codeNum}

Given a partial eligible \cd{Placement}, a node \cd{N} and the \cd{Size} of a new image to be placed on \cd{N}, predicate \cd{storageOk/3} checks whether node \cd{N} can also host the new image without exceeding its capacity. It first retrieves the node \cd{Storage} capacity (line~26) and sums up resources allocated at \cd{N} by the partial placement through predicate \cd{usedHW/3} (line~27, 29--30). Finally, it checks Equation~\ref{eq:hardwarelimit} of Definition~\ref{def:eligiblePlacement} for the partial \cd{Placement} extended with the new image replica (line 28).

\medskip\begin{codeNum}[firstnumber=25, breaklines]
storageOk(Placement, N, Size) :- 
    node(N, Storage, _),
    usedHw(Placement, N, UsedHw),
    Storage - UsedHw >= Size.

usedHw(P, N, TotUsed) :- 
    findall(S, (member(at(I,N), P), image(I,S,_)), Used), sum_list(Used, TotUsed).
\end{codeNum}

The above 30 lines of code, mimicking the incremental nature of Definition~\ref{def:eligiblePlacement}, already solve the \textit{image placement problem} (viz., Problem~\ref{problem:image_placement}) stated in Section~\ref{sec:problemstatement} by determining a total eligible placement of a set of images onto a target infrastructure.

\medskip\begin{example}\label{ex:imagePlacement}
Querying the \cd{imagePlacement/5} predicate over the input of Example~\ref{ex:encoding} returns as a first result the eligible image placement \cd{P}
\medskip\begin{code}[breaklines]
?- imagePlacement([alpine,ubuntu,nginx], [cloud,edge1,edge2,edge3,edge4,edge5], [], P, 3).
P = [at(nginx, edge5), at(nginx, edge3), at(nginx, cloud), at(ubuntu, edge5), at(ubuntu, edge1), at(ubuntu, cloud), at(alpine, cloud)].
\end{code}
\noindent which, going back to the model of Section~\ref{sec:formalmodel}, corresponds to the complete eligible placement $p$ defined as follows
$$p(\cd{alpine:latest}) = \{\cd{cloud}\} $$
$$p(\cd{ubuntu:latest}) = \{\cd{cloud, edge1, edge5}\}\ \ \ p(\cd{nginx:latest}) = \{\cd{cloud, edge3, edge5}\} $$

\hfill $\qed$

\end{example}

\noindent\textbf{Iterative Deepening} To limit the amount of memory used by Prolog's backtracking and make our program more efficient, \cd{imagePlacement/5} is wrapped in the \cd{id/6} predicate (lines 31--36) that performs iterative deepening~\cite{AIMA}. At each step, if the replica factor \cd{R} has not reached the maximum replica factor \cd{RMax} (line 32), \cd{id/6} exploits \cd{imagePlacement/5} to determine a total eligible \cd{Placement} with at most \cd{R} replicas of each image (line 33). If such a placement is not found, the value of \cd{R} is incremented (line 35) and \cd{id/6} is called recursively (line 36). 

\medskip\begin{codeNum}[firstnumber=31, breaklines, commandchars=\\\{\} ]  
id(Images, Nodes, PP, Placement, R, RMax) :-
    R =< RMax, 
    imagePlacement(Images, Nodes, PP, Placement, R), !.
id(Images, Nodes, PP, Placement, RMax) :-
    R =< RMax, NewR is R+1,
    id(Images, Nodes, PP, Placement, NewR, RMax).
\end{codeNum}

The \cd{placement/6} (lines 37--41) predicate relies on \cd{id/6} (setting the replica factor from 1 to \cd{RMax}) to determine a total eligible \cd{Placement}, possibly starting from a partial eligible placement \cd{PPlacement}. It then computes its \cd{Cost} by relying on predicate\footnote{The Prolog built-in predicate \cd{findall(X, Goal, Result)} collects in the list \cd{Result} all the items \cd{X} satisfying the given \cd{Goal}. The Prolog built-in predicate \cd{sumlist(List, Sum)} computes the \cd{Sum} of all elements in the given \cd{List}.} \cd{cost/2} (lines 39, 42--43) and its storage \cd{Allocation} via predicate \cd{allocatedStorage/2} (lines 40, 44--45). Storage allocation is simply represented as a list of pairs \cd{(N,Storage)} that keeps track of the \cd{Storage} used by each image at each node \cd{N}. Finally, predicate \cd{storePlacement/3} (lines 41, 46--49) stores the found \cd{Placement}, its storage \cd{Allocation}, and its \cd{Cost} in the knowledge base as a fact of the form
\medskip\begin{code}
placedImages(Placement, Allocation, Cost).
\end{code} 
\noindent that will later be used by our continuous reasoning approach. If a \cd{placedImages/3} fact already exists, it is retracted (line 47), and then the new one is asserted (line 48).

\medskip\begin{codeNum}[firstnumber=37, breaklines]  
placement(Images, Nodes, MaxR, PPlacement, Placement, Cost) :-
    id(Images, Nodes, PPlacement, Placement, 1, MaxR), 
    cost(Placement, Cost),
    allocatedStorage(Placement,Allocation),
    storePlacement(Placement, Allocation, Cost).

cost(Placement, Cost) :- 
    findall(CS, (member(at(I,N),Placement), image(I,S,_), node(N,_,C), CS is C*S), CSs), sum_list(CSs, Cost).

allocatedStorage(P, Alloc) :- 
    findall((N,S), (member(at(I,N), P), image(I,S,_)), Alloc).

storePlacement(Placement, Alloc, Cost) :-
    ( placedImages/3 -> retract(placedImages/3) ; true),
    assert(placedImages(Placement, Alloc, Cost)).
\end{codeNum}

\medskip\begin{example}
Querying \cd{id/6} over the knowledge base of Example~\ref{ex:encoding} returns the same placement \cd{P} of Example~\ref{ex:imagePlacement}, along with its associated cost of \texteuro 437. 

\medskip\begin{code}[breaklines]
?- placement([alpine,ubuntu,nginx], [cloud,edge1,edge2,edge3,edge4,edge5], 3, [], P, 1, Cost).
P = [at(nginx, edge5), at(nginx, edge3), at(nginx, cloud), at(ubuntu, edge5), at(ubuntu, edge1), at(ubuntu, cloud), at(alpine, cloud)].
Cost = 437.0 
\end{code}

\noindent 
As a side effect, it also asserts a fact \cd{placedImages(P,Alloc,Cost)} within the knowledge base. Such fact can be retrieved as follows

\medskip\begin{code}[breaklines]
? - placedImages(P, Alloc, Cost).
P = [at(nginx, edge5), at(nginx, edge3), at(nginx, cloud), at(ubuntu, edge5), at(ubuntu, edge1), at(ubuntu, cloud), at(alpine, cloud)],
Alloc = [(edge5, 192), (edge3, 192), (cloud, 192), (edge5, 69), (edge1, 69), (cloud, 69), (cloud, 8)],
Cost = 437.0.
\end{code}

\noindent Such asserted facts will be used in the next section for adapting image placements in response to infrastructure changes through continuous reasoning. \hfill $\qed$

\end{example}


\subsection{Solving CIPP with Prolog}
\label{sec:continuousreasoning}

Based on the above, we now describe a solution to CIPP (Problem~\ref{problem:continuous_image_placement}).
Predicate \cd{cr/7} (lines 49--58) inputs a list \cd{[I|Is]} of images to be placed, a list of target \cd{Nodes}, and the current image placement \cd{P} with its associated allocation \cd{Alloc}. By recurring on the list of images to be placed, it determines a portion \cd{NewPOk} of \cd{P}, which constitutes a partial eligible placement of \cd{[I|Is]} over \cd{Nodes}, and collects in the \cd{KO} list all images that need to be migrated. 
The first clause of  \cd{cr/7} retrieves the placement of image \cd{I} from the current placement \cd{P} and appends it to the partial eligible placement \cd{POk} being build by creating \cd{TmpPOk} (lines 50). Then, \cd{cr/7} checks\footnote{Note that \cd{cr/7} exploits an extended version of predicate \cd{storageOk/3}, which checks all placement nodes and inputs the current allocation \cd{Alloc}. This is needed to account for changes in the size of an image by unloading the previous allocation and summing up possibly updated image requirements. Interested readers can find the code of \cd{storageOk/4} in the \proimg{} repository. } whether for \cd{TmpPOk} Equation~\ref{eq:maxreplicas}, Equation~\ref{eq:transfertimes} and Equation~\ref{eq:hardwarelimit} still hold (lines 51--53). In such a case, \cd{TmpPOk} is passed to the recursive call, moving to the next image. 
Whenever such conditions do not hold true\footnote{Pursuing efficiency, instead of negating \cd{transferTimesOk/3} and \cd{storageOk/4} in the second clause of \cd{cr/7}, we use the cut operator (line 53) to prevent backtracking, thus avoiding that an image already included in \cd{POk} (i.e., matching the first clause of \cd{cr/7}) is then also included in \cd{KO} (by a later match with the third clause of \cd{cr/7}).}, the second clause of  \cd{cr/7} adds image \cd{I} to the list of \cd{KO} images (line 55) and recurs to the next image without extending the partial eligible placement being build (line 56). It is worth noting that this case also handles new images to be placed, which can be continuously added to the image repository, by including them in the \cd{KO} list. The recursion ends when the list of images to be checked is empty (line 57).

\medskip\begin{codeNum}[firstnumber=49, breaklines]  
cr([I|Is], Nodes, MaxR, P, POk, NewPOk, Alloc, KO) :-
    findall(at(I,N), member(at(I,N), P), INs), append(INs, POk, TmpPOk),
    length(INs, IReplicas), IReplicas =< MaxR,
    transferTimesOk(I, Nodes, TmpPOk),
    image(I, Size, _), storageOk(I, Size, TmpPOk, Alloc), !,
    cr(Is, Nodes, MaxR, P, TmpPOk, NewPOk, Alloc, KO).
cr([I|Is], Nodes, MaxR, P, POk, NewPOk, Alloc, [I|KO]) :-
    cr(Is, Nodes, MaxR, P, POk, NewPOk, Alloc, KO).
cr([], _, _, _, POk, POk, _, []).
\end{codeNum}

Predicate \cd{crPlacement/5} retrieves the current \cd{Placement} and its allocation \cd{Alloc} (line 60) and exploits \cd{cr/7} to identify a partial eligible \cd{PPlacement} and the list of \cd{KOImages} within \cd{Placement} (line 60). Finally, it relies on \cd{placement/6} to determine a complete eligible \cd{NewPlacement} that extends the identified partial one (line 61). If no previous placement exists or it is not possible to fix the current one via continuous reasoning, predicate \cd{crPlacement/5} attempts to determine a complete eligible \cd{InitPlacement} from scratch (line 62--63).

\medskip\begin{codeNum}[firstnumber=58, breaklines]
crPlacement(Images, Nodes, MaxR, NewPlacement, Cost) :- 
    placedImages(Placement, Alloc, _),
    cr(Images, Nodes, MaxR, Placement, [], OkPlacement, Alloc, KOImages), dif(KOImages, Images),
    placement(KOImages, Nodes, MaxR, OkPlacement, NewPlacement, Cost).
crPlacement(Images, Nodes, MaxR, InitPlacement, Cost) :- 
    placement(Images, Nodes, MaxR, [], InitPlacement, Cost).
\end{codeNum}

Lastly, predicate \cd{declace/2} (lines 64--68) can be called periodically over a changing knowledge base to continuously determine needed updates of the image \cd{Placement} and its associated \cd{Cost}. It simply retrieves input data for \cd{crPlacement/5} via predicates \cd{imagesToPlace/1}, \cd{networkNodes/1} and via the fact \cd{maxReplicas/1} (lines 65--67), and last it calls \cd{crPlacement/5} stopping at the first result through \cd{once/1} (line 68). 

Note that predicates \cd{imagesToPlace/1} and \cd{networkNodes/1} (lines 65 and 66, respectively) sort their output to drive backtracking via a combination of a \textit{fail-first} and a \textit{fail-last} heuristics~\cite{AIMA}. Particularly, \cd{Images} are sorted (and will be explored) by size in descending order so as to try to place larger images first, in that they might be supported by fewer candidate nodes (\textit{fail-first}).
Nodes are instead sorted by increasing cost and decreasing average outgoing bandwidth and hardware capabilities, thus pursuing the two-fold objective of containing deployment costs and avoiding filling up less capable nodes too early in the search (\textit{fail-last}). 

\medskip\begin{codeNum}
declace(Placement, Cost) :-
    imagesToPlace(Images), 
    networkNodes(Nodes), 
    maxReplicas(Max),
    crPlacement(Images, Nodes, Max, Placement, Cost), !.
\end{codeNum}

The above 68 lines of code, which already constitute the core of \proimg{}, heuristically solve the \textit{continuous image placement problem} (viz., Problem~\ref{problem:continuous_image_placement}) stated in Section~\ref{sec:problemstatement} by determining a total eligible placement of a set of images onto a target infrastructure, starting from a previous partial eligible placement. 
Note that the proposed solution adapts the current placement in response to changes in the image repository to be distributed (i.e., image addition/removal and storage requirements changes), in the target infrastructure (i.e., node/link crash, link QoS degradation), and in the maximum replica factor.

\medskip\begin{example}
Querying predicate \cd{declace/2} over the knowledge base of Example~\ref{ex:encoding} gives the following result

\medskip\begin{code}[breaklines]
?- declace(P, Cost).
P = [at(alpine, edge2), at(ubuntu, edge5), at(ubuntu, edge2), at(nginx, edge3), at(nginx, edge5), at(nginx, edge2)],
B = 308.0.
\end{code}

\noindent which corresponds to the eligible placement of Example \ref{ex:eligibleplacement}, costs \texteuro 308, and replicates \cd{alpine} onto one node, \cd{ubuntu} onto two nodes, and \cd{nginx} onto three nodes.

Assume that node \cd{edge3} fails. Querying again predicate \cd{declace/2} over the updated knowledge base adapts the previous placement as follows

\medskip\begin{code}[breaklines,commandchars=\\\{\}]
?- declace(PNew, Cost).
PNew = [at(nginx, edge5), at(nginx, edge2), at(alpine, edge2), at(ubuntu, edge5), at(ubuntu, edge2)],
Cost = 212.0.
\end{code}
\noindent which simply removes the \cd{nginx} image deployed at \cd{edge3}.
Similarly, adding a new image to place \cd{image(redis,149,60)} to the repository and querying \cd{declace/2} returns the following placement

\medskip\begin{code}[breaklines,commandchars=\\\{\}]
?- declace(PNew, Cost).
PNew = [\textcolor{OliveGreen}{at(redis, n5), at(redis, n2)}, at(alpine, n2), at(ubuntu, n5), at(ubuntu, n2), at(nginx, n5), at(nginx, n2)],
Cost = 331.2.
\end{code}

\noindent without affecting the previously determined one but only adding the placement for the \cd{redis} image, highlighted in \textcolor{OliveGreen}{green}. \hfill$\qed$

\end{example}

\rev{The described Prolog implementation can be easily extended with a predicate to solve OIPP. However, proving a placement \cd{P} optimal would require exploring the whole search space, incurring in exp-time complexity. This approach becomes practically unfeasible even over small problem instances and it is the reason why, in the next section, we present an ASP encoding for solving OIPP. As aforementioned, ASP is particularly well-suited to efficiently solve optimisation problems~\cite{DBLP:books/sp/Lifschitz19}.}

\subsection{Solving OIPP with ASP}
\label{subsec:asp_representation}

\rev{We now introduce an ASP encoding (according to the \textsc{GrinGo} language~\cite{potasscoGuide}) to solve the OIPP problem (Problem~\ref{problem:optimal_image_placement}). 
We follow the \emph{Guess, Check \& Optimise} paradigm, thus defining programs that (\textit{i}) generate the search space of possible image placements, (\textit{ii}) prune invalid ones, and (\textit{iii}) minimise an objective function to rank valid solutions.}

\begin{figure}
\begin{codeNum}[firstnumber=75]
1 { placement(I, N): node(N, _, _)} R :- image(I, _, _), maxReplicas(R).   

transfer_time(Img, Src, Src, 0) :- 
    placement(Img, Src).
transfer_time(Img, Src, Dst, @compute_transfer_time(S,B,L)) :- 
    placement(Img, Src),
    image(Img, S, _),
    link(Src, Dst, L, B).

transfer_ok(Img, Dst) :-
    transfer_time(Img, X, Dst, T),
    image(Img, _, MaxTransferTime),
    T <= MaxTransferTime.
    
:- image(Img,_,_), node(Dst, _, _), not transfer_ok(Img, Dst).    

:- TS = #sum{Size, Img : image(Img, Size, _), 
        placement(Img, N)}, node(N, Cap, _), TS > Cap. 

:~ placement(Img, X), node(X, _, Cost), 
   image(Img, Size, _). [Cost * Size, Img, X]  
\end{codeNum}
    \caption{ASP encoding of OIPP.}
    \label{fig:ASPsolution}
\end{figure}

\rev{The ASP input follows the format introduced in Section~\ref{sec:kr}. Figure~\ref{fig:ASPsolution} lists the complete ASP encoding.
The first requirement is for each image to be deployed on at least one (and at most \cd{R}) nodes as per Equation~\ref{eq:maxreplicas} of Definition~\ref{def:eligiblePlacement}. This is succintly encoded at line 75, where the \cd{placement(I,N)} atom models that each image \cd{I} is placed onto a non-empty set of at most \cd{R} nodes \cd{N}.}

\rev{The rules at lines 76--81 account for the computation of transfer times, where \cd{transfer_time(I, Src, Dst, T)} models that \emph{it is possible to transfer image \cd{I} from node \cd{Src} to node \cd{Dst} in \cd{T} time units}. The first rule (line 76--77) states that if the image \cd{Img} is placed on node \cd{Src} its transfer time towards \cd{Src} itself is null.
The second rule computes transfer times in the general case as in the Prolog code (lines 19--24) and relies on the \atterm \cd{compute\_transfer\_time}, which executes a Python function to compute transfer times of images between pairs of nodes -- since ASP does not natively support floating point operations.}
Last, the rule \cd{transfer\_ok(Img,Dst)} models that the image \cd{Img} can be pulled from node \cd{Dst} within its maximum transfer time.

\rev{We now define \emph{constraints} to prune invalid image placements generated by the above choice rule.
First, each image's transfer time must meet the specified maximum latency requirements as per Equation~\ref{eq:transfertimes} of Definition~\ref{def:eligiblePlacement}. Thus, our ASP enconding discards placements that do no allow a node to pull an image within its maximum transfer time (line 86).}
\rev{Furthermore, the images placed on the same node should not exceed its capacity as per Equation~\ref{eq:hardwarelimit} of Definition~\ref{def:eligiblePlacement}. This is achieved by discarding solutions that do not meet this constraint (line 87--88).
Here, \cd{TS} is the sum (computed with the aggregate \cd{\#sum}) of the sizes \cd{Size} over all the pairs \cd{(Size, Img)} for an image \cd{Img} with size \cd{Size} (line 87) placed on the node \cd{N} with maximum capacity \cd{Cap} (line 88). The above lines determine all solutions to IPP. } 

\rev{Finally, we aim at finding a cost-optimal image placement, i.e., one that minimise storage costs as per Definition~\ref{def:costOptimal}. We can do so by adding the \emph{weak constraint} at line 89--90.
Weak constraints enable defining a \emph{preference criterion} over a logic program's answer sets, akin to an \emph{objective function} in an optimisation problem. The weak constraint in our program expresses the following: when there is a placement of an image \cd{Img} of size \cd{Size} on a node \cd{X} that has monthly a cost per MB of \cd{Cost}, such placement contributes to the cost of the answer set as the additive term \cd{Cost * Size} once for each distinct pair of the tuple \cd{(Img,X)}.
Optimal answer sets are those with minimal costs.}

\subsection{Functioning of \proimg{} }
\label{sec:architecture}

We assume that \proimg{} periodically checks the state of the image placement exploiting  Prolog continuous reasoning in combination with ASP-based optimisation. Figure~\ref{fig:flow} sketches the overall functioning of \proimg{}. 
ASP is used to solve OIPP when (\textit{i})~an \textit{initial image placement} needs to be determined, or (\textit{ii}) the heuristic \textit{continuous reasoning}, invoked via \cd{declace/2}, \textit{fails to fix the current placement} after an infrastructure or repository change. Informally, the ASP encoding acts as a backup for the second clause of \cd{crPlacement/5} of Section~\ref{sec:continuousreasoning}. 

\begin{figure}[ht]
    \centering
    \resizebox{0.95\textwidth}{!}{%
    \begin{tikzpicture}[node distance=2cm]

    \node (start) [start] {};
    \node (solveOIPPASP) [process, below of=start, yshift=0.5cm] {solve OIPP with ASP};
    \node (dummy_solve) [right of=solveOIPPASP,xshift=0.5cm] {};
    
    \node (found) [decision, below of=solveOIPPASP, yshift=-1cm] {
    found
    placement?
    };
    
    \node (waitNo) [process, left of=found, xshift=-2cm] {wait\\ monitoring period};
    
    \node (waitYes) [process, right of=found, xshift=2cm] {wait\\ monitoring period};
    \node (waitYes_dummy) [below of=waitYes] {};
    
    \node (CIPPProlog) [process, right of=waitYes, xshift=2cm] {solve CIPP with Prolog};
    
    \node (foundPL) [decision, right of=CIPPProlog,xshift=2cm] {found placement?};
    \node (dummy_found_bottom_yes) [below of=foundPL] {};

    \draw [arrow] (start) -- (solveOIPPASP);
    \draw [arrow] (solveOIPPASP) -- (found);
    \draw [arrow] (found) -- node[anchor=north] {yes} (waitYes);
    \draw [arrow] (found) -- node[anchor=north] {no} (waitNo);
    \draw [arrow] (waitNo) |- (solveOIPPASP);
    \draw [arrow] (waitYes) -- (CIPPProlog);
    \draw [arrow] (CIPPProlog) -- (foundPL);
    
    \draw [arrow] (foundPL) |- node[anchor=south east] {no} (solveOIPPASP);

    \draw [] (foundPL) |- node[anchor=north west] {yes} (dummy_found_bottom_yes);
    \draw [] (dummy_found_bottom_yes) -| (waitYes_dummy);
    \draw [] (dummy_found_bottom_yes) |- (waitYes_dummy);
    \draw [arrow] (waitYes_dummy) -| (waitYes);

    \end{tikzpicture}
    }
    \caption{Workflow of \proimg{}.}
    \label{fig:flow}
\end{figure}

    

    

Such behaviour, which we will thoroughly assess in the next section, suitably alternates ASP optimisation and continuous reasoning adaptation steps, balancing between prompt adaptation and cost optimisation of found placements. On one hand, it ensures optimal initial placements and optimal re-placements when the current status cannot be recovered. On the other hand, it features faster decision-making in the adaptation phase at runtime, by focussing only on those images that need to be migrated to reduce service disruption and avoid unnecessary image transfers.
Indeed, ASP solves OIPP exactly with the disadvantage of possibly incurring longer execution times over small problem instances than our Prolog heuristic, which solves problem instances reduced via continuous reasoning, taming their NP-hard complexity.

At the time of initial placement, we may want to find an optimal solution, since images are yet to be deployed and we could wait some more time in favour of cost minimisation. Once deployed, instead, the goal is to react as fast as possible to network failures or changes in the image requirements.
Hence, quickly determining a (sub-optimal) eligible placement and reducing the number of images to be migrated through continuous reasoning may be preferred to finding a cost-optimal solution. 
Analogously, ASP is used when continuous reasoning fails to adapt the current placement. This situation reflects some critical network scenarios that might require some more attention for determining a new placement, thus tolerating longer execution times.

\section{Experimental Assessment}\label{sec:experiments}

\begin{table}[t]
\centering
\caption{Experimental image set.}
\label{tab:images}
{\footnotesize
\begin{tabular}{|c|c|c|c| |c|c|c|c|}
\hline
\textbf{Id} & \textbf{Image name} & \textbf{Size {[}MB{]}} & \textbf{$T_{max}$} & \textbf{Id} & \textbf{Image name} & \textbf{Size {[}MB{]}} & \textbf{$T_{max}$} \\ \hline
1  & \cd{busybox}    & 4    & 15    & 7  & \cd{httpd}      & 195  & 60    \\ \hline 
2  & \cd{memcached}  & 126  & 30    & 8  & \cd{postgres}   & 438  & 120   \\ \hline
3  & \cd{nginx}      & 192  & 60    & 9  & \cd{ubuntu}     & 69   & 15    \\ \hline
4  & \cd{mariadb}    & 387  & 120   & 10 & \cd{redis}      & 149  & 30    \\ \hline
5  & \cd{alpine}     & 8    & 15    & 11 & \cd{rabbitmq}   & 201  & 60    \\ \hline
6  & \cd{traefik}    & 148  & 30    & 12 & \cd{mysql}      & 621  & 120   \\ \hline

\end{tabular}}
\end{table}

\noindent
In this section, we report and discuss the results of our experiments\footnote{All experiments run on a machine featuring Ubuntu 20.04.5 LTS equipped with 64GB of RAM and a 2.3GHz Intel Xeon Gold 5218 16-core processor.}  with the prototypes illustrated in Section~\ref{sec:methodology}. In the following, we will assess, compare and contrast via simulation the performance of both the individual components of \proimg and of their combination as per the workflow depicted in Figure~\ref{fig:flow}. 

\subsection{Assessment of ASP and Iterative Deepening}
\label{sec:aspvsprolog}

In this first experiment, we compare the performance of our iterative deepening solution and ASP solutions to IPP and OIPP, respectively. Particularly, we consider the execution times of both solutions and the cost of their output placements. 

\medskip\noindent\textbf{\textit{Setup}} We consider the set of images of Table~\ref{tab:images}, and we group them into three subsets, each containing four images: $I_1=\{\cd{busybox}, $ $ \cd{memcached},$ $ \cd{nginx}, $ $ \cd{mariadb}\}$,  $ I_2=\{\cd{alpine}, \cd{traefik},$ $ \cd{httpd},$ $\cd{postgres}\}$, and $I_3=\{\cd{ubuntu},$ $ \cd{redis},$ $\cd{rabbitmq},$ $ \cd{mysql}\}$. Note that the three subsets are homogeneous in that they contain one small image (with maximum transfer time $T_{max}=15$ seconds), two medium images ($T_{max}=30$ and $T_{max}=60$ seconds, respectively), and one large image ($T_{max}=120$ seconds).

We then generate random infrastructures at varying number of nodes $n$. Each infrastructure is generated according to a Barabasi-Albert topology~\cite{barabasi1999emergence}, with preferential attachment parameter $m=3$ by relying on the Python's NetworkX library~\cite{hagberg2008networkx}. The Barabasi-Albert 
yields graphs with inhomogeneous degree distributions that are often observed in real-life networks~\cite{barabasi1999emergence}, characterised by \textit{hubs} with a high degree (e.g. Cloud nodes), while most of the nodes have a lower degree (e.g. edge noes), such as computer networks and the Internet~\cite{BARABASIVABENEPERLERETI}. 
As the first nodes in the network are likely to be connected to many other nodes in Barabasi-Albert topologies, we generate infrastructures with three extra nodes and we then remove the first three nodes from the obtained graph to avoid that they can host all image replicas.

On top of the generated network topology, each node gets assigned its storage according to a uniform discrete distribution: 10\% of the nodes get 2000 MB, 50\% of the nodes get 4000 MB and, 40\% of the nodes get either 8000 MB or 16000 MB.
Links in the generated network are assigned a latency in between 1 and 10 ms, and a bandwidth in between 25 and 1000 Mbps. Last, end-to-end links are computed by determining shortest paths between nodes based on latency through Dijkstra algorithm, and constraining them to the minimum bandwidth value along the path. 

This first experiment consists of trying to place from scratch the set of images $I_1$, $I_1 \cup I_2$ and $I_1 \cup I_2 \cup I_3$ onto 1000 randomly generated infrastructures per each infrastructure size $n \in \{ 25, 50, 75,$ $ 100, 125, 150\}$. We set the maximum amount of replicas $R=15$. For all runs, we measure execution times (with a timeout of $45$ seconds) and output placement costs.

\medskip\noindent\textbf{\textit{Results}} Figure~\ref{fig:4images},~\ref{fig:8images}, and~\ref{fig:12images} show the average execution times and placement costs obtained over 1000 random instances when placing $I_1$ (4 images), $I_1 \cup I_2$ (8 images) and $I_1 \cup I_2 \cup I_3$ (12 images) onto varying infrastructure sizes. We only consider instances for which both ASP and iterative deepening can succeed, i.e., we do not account for problem instances that are labelled as \textit{unsatisfiable} by our ASP solution. 

Figure~\ref{fig:4images_time} shows that iterative deepening is faster than ASP in placing 4 images for infrastructure of at least 75 nodes. Particularly, it is at least 1--2 seconds faster than ASP for infrastructure sizes above 100 nodes; this corresponds to iterative deepening being between 2\% (25 nodes) and 43\% (150 nodes) faster than ASP in determining eligible placements. For what concerns cost, instead, iterative deepening determines solutions that are on average more expensive by a factor between 4\% (25 nodes) and 18\% (150 nodes) than the optimal solution identified by ASP (Figure~\ref{fig:4images_cost}). In our experiments, this constitutes a difference of at most 45 euro per month for 150 nodes. 

Similarly, Figure~\ref{fig:8images_time} shows that iterative deepening is faster than ASP in placing 8 images, already for infrastructures larger than 50 nodes. This corresponds to execution times that are between 11\% (50 nodes) and 58\% (150 nodes) faster, in favour of iterative deepening. Again, such a performance increase is paid at the price of solutions that are sub-optimal by a percentage between 7\% (25 nodes) and 21\% (150 nodes) (Figure~\ref{fig:8images_cost}). This turns out in execution times of at most 3 seconds for iterative deepening and of almost 8 seconds for ASP on average, with a cost difference of at most 115 euro per month on average. 

Then, Figure~\ref{fig:12images_time} shows that iterative deepening is always faster than ASP in placing 12 images. In this latter case, which represents more complex problem instances, iterative deepening is between 20\% (25 nodes) and 81\% (150 nodes) faster than ASP in determining solutions that are between 5\% (25 nodes) and 15\% (150 nodes) far from optimal (Figure~\ref{fig:12images_cost}). This corresponds to at most 5-second execution times for iterative deepening, with a cost increase of at most 150 euro per month on average, compared to ASP execution times that reach 22 seconds on average.

Note that, while ASP always determines eligible placements -- if any exists -- the heuristic approach incurs in some instances that cannot be solved within the timeout. Such number increases along with the size of the considered problem instances (from $0.1\%$ -- 8 images, 50 nodes -- up to $11.8\%$ of the instances -- 12 images, 150 nodes).

\begin{figure*}[htb]
\begin{subfigure}{.5\textwidth}
  \centering
  \includegraphics[width=\linewidth]{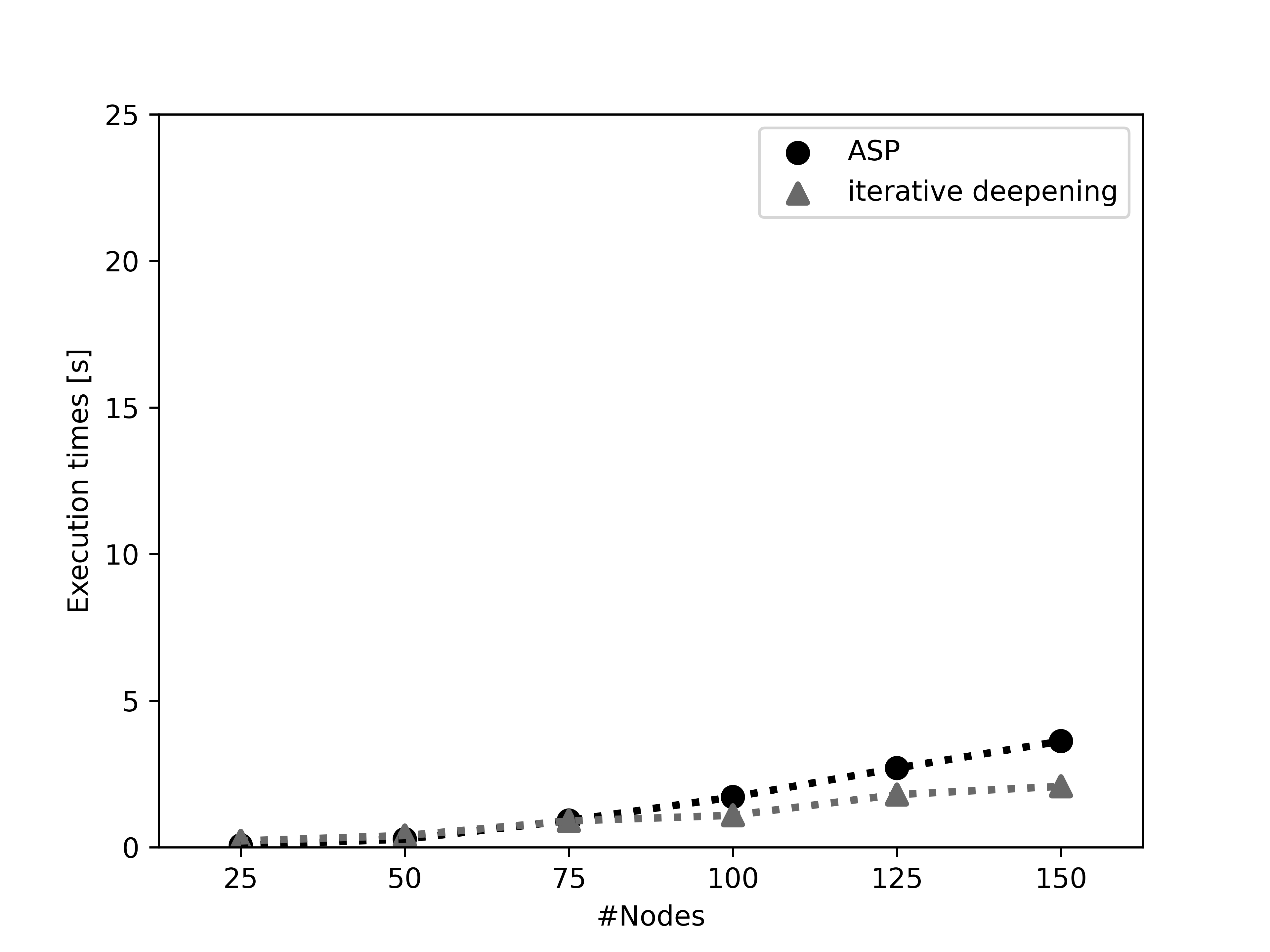}
  \caption{execution times}
  \label{fig:4images_time}
\end{subfigure}%
\begin{subfigure}{.5\textwidth}
  \centering
  \includegraphics[width=\linewidth]{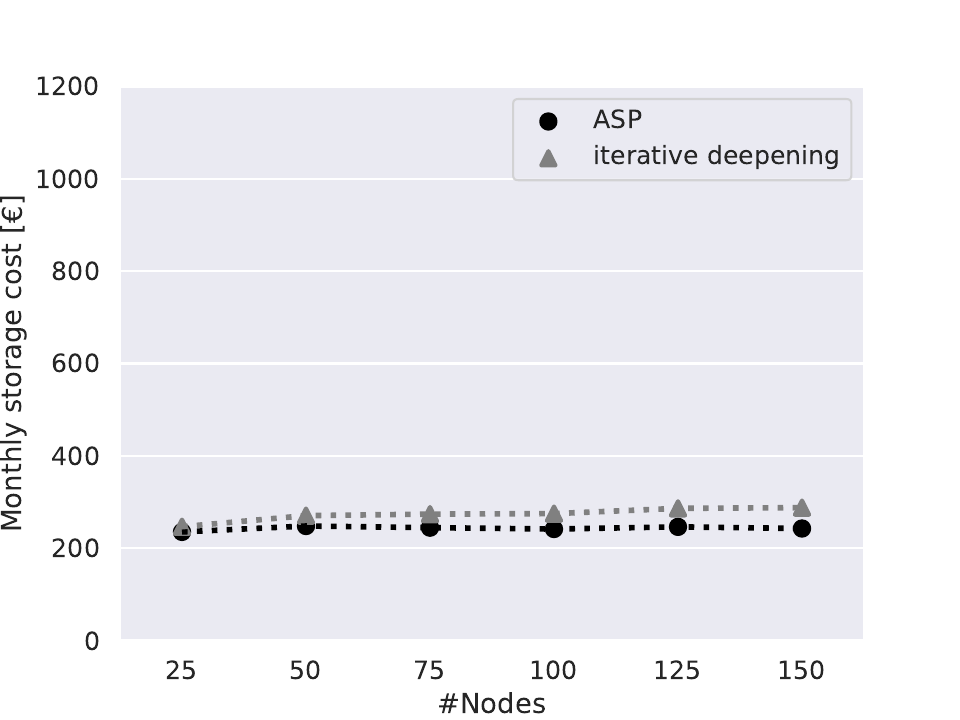}
  \caption{costs}
  \label{fig:4images_cost}
\end{subfigure}
\caption{ASP versus iterative deepening at varying number of nodes (4 images).}
\label{fig:4images}
\end{figure*}

\begin{figure*}[htb]
\begin{subfigure}{.5\textwidth}
  \centering
  \includegraphics[width=\linewidth]{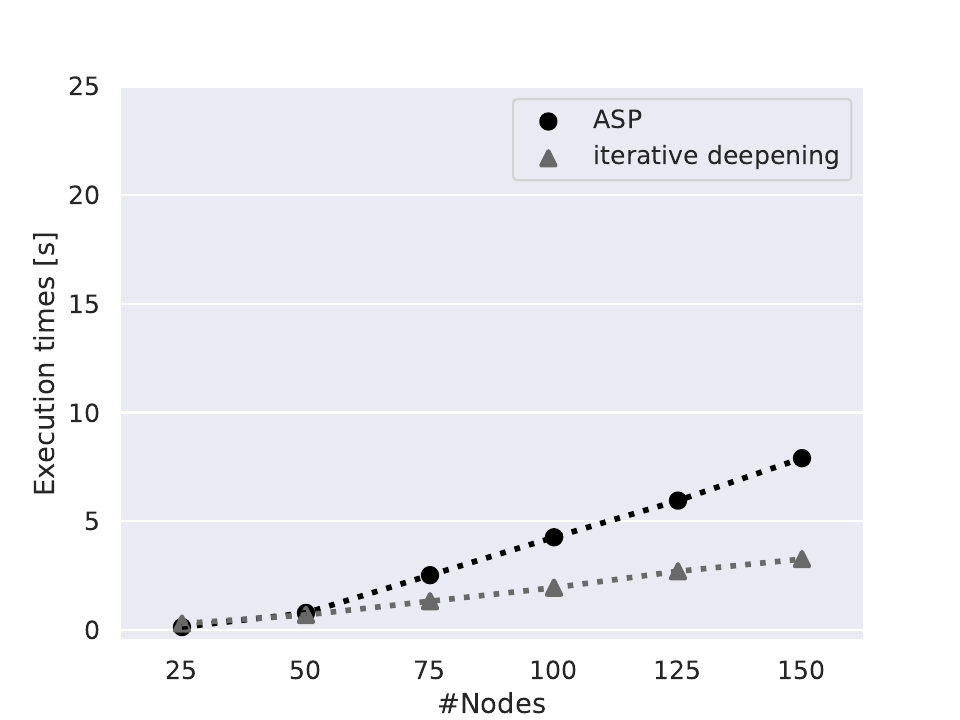}
  \caption{execution times}
  \label{fig:8images_time}
\end{subfigure}%
\begin{subfigure}{.5\textwidth}
  \centering
  \includegraphics[width=\linewidth]{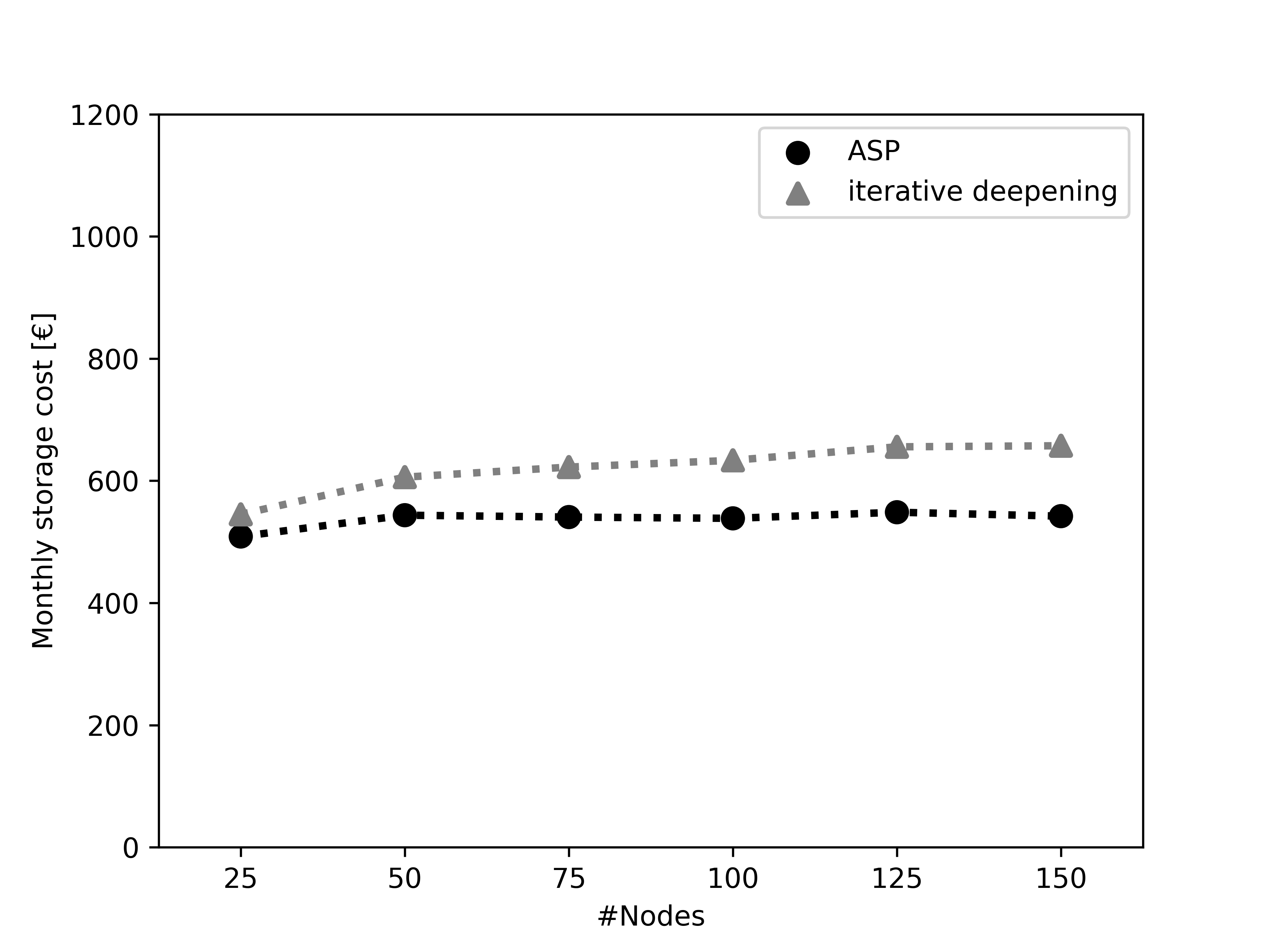}
  \caption{costs}
  \label{fig:8images_cost}
\end{subfigure}
\caption{ASP versus iterative deepening at varying number of nodes (8 images).}
\label{fig:8images}
\end{figure*}

\begin{figure*}[htb]
\vspace{-0.7cm}
\begin{subfigure}{.5\textwidth}
  \centering
  \includegraphics[width=\linewidth]{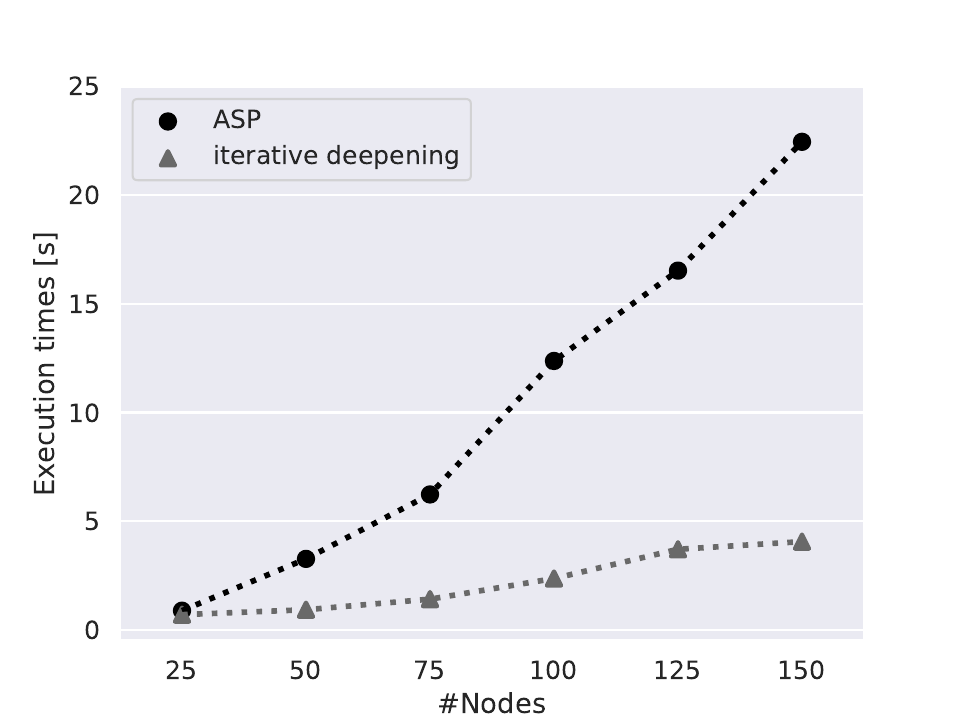}
  \caption{execution times}
  \label{fig:12images_time}
\end{subfigure}%
\begin{subfigure}{.5\textwidth}
  \centering
  \includegraphics[width=\linewidth]{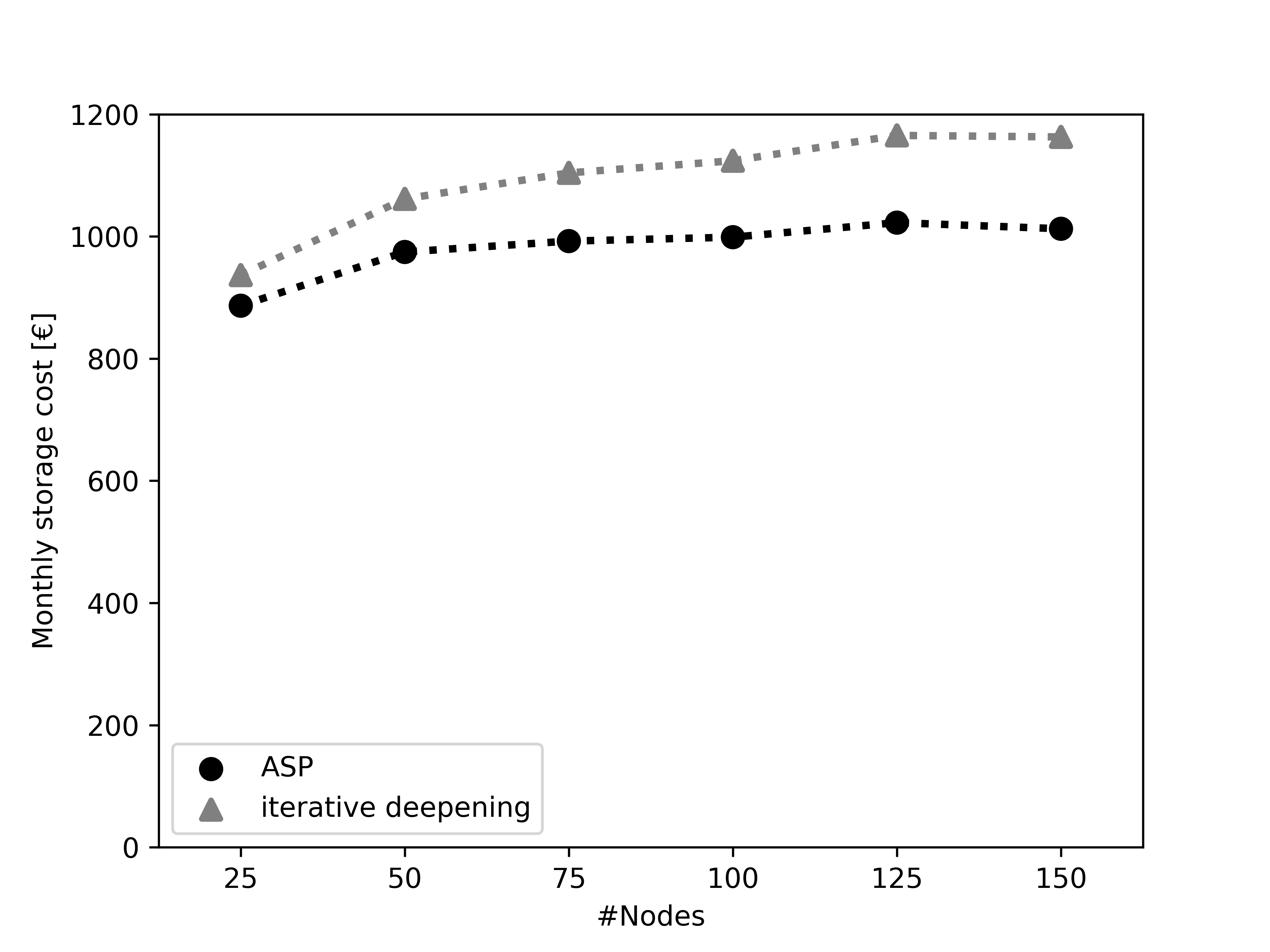}
  \caption{costs}
  \label{fig:12images_cost}
\end{subfigure}
\caption{ASP versus iterative deepening at varying number of nodes (12 images).}
\label{fig:12images}
\end{figure*}

\vspace{-0.9cm}
\medskip\noindent\textbf{\textit{Discussion}} These results confirm the analyses of Section~\ref{sec:architecture}:
\begin{itemize}
    \item \textit{iterative deepening is in general faster than ASP} (by a factor between 2\% and 81\%) in solving the considered container image placement problem both at varying infrastructure sizes and number of images to be placed,
    \item \textit{ASP always finds exact solutions to OIPP}, while heuristic iterative deepening determines solutions that are between 4\% and 18\% far from optimal, and shows an increasing failure rate while handling larger problem instances,
    \item \textit{iterative deepening stays closer to the optimal when placing fewer images} irrespectively of the size of the target infrastructure, as per Figures \ref{fig:4images} and \ref{fig:8images}.
\end{itemize}

Finally, the above experimental results allow us to estimate reasonable timeout thresholds for both our ASP encoding (which did not exceed 25 seconds on average) and heuristic iterative deepening solution (which did not exceed 6 seconds on average). 

\subsection{Assessment of \proimg adaptive behaviour}
\label{sec:declace_assessment}

In this second experiment, we will assess the performance of \proimg in continuously adapting the placement of container images over cloud-edge resources, i.e.,~in solving CIPP over lifelike scenarios at an increasing number of nodes and images to place. As before, we focus on execution times and placement costs achieved by \proimg against an ASP-computed baseline. 

\medskip\noindent\textbf{\textit{{Setup}}} We consider again the set of images reported in \cref{tab:images}, and group them as before into the same subsets of images to place, viz. $I_1$, $I_1 \cup I_2$ and $I_1 \cup I_2 \cup I_3$.
For each amount of images to place, we generate a random Barabasi-Albert infrastructure again setting parameter $m=3$. Each node is initially assigned 4000 MB of storage, while end-to-end links are generated with the same QoS profiles and routing procedure as in the previous experiment. 
Differently from the previous experiment, since we are interested in evaluating our methods in an dynamic scenario, we do not generate multiple random infrastructures but simulate life-like network events by modifying an initial random infrastructure and set of images for $1000$ epochs. At each epoch, the following \textcolor{black}{changes in the infrastructure} can happen:

\begin{itemize}
    \item \textit{node failure}: each node in the network can fail (and be removed from the infrastructure for one\footnote{To avoid complete network degradation and considerable reduction of the network size, failed nodes are started anew in the subsequent simulation step. Since node failures affect network for a single simulation step, this implicitly models re-connection of nodes in the infrastructure.} epoch.) with a probability of $5\%$,
    \item \textit{link QoS and node storage variations}: with a probability of $0.5$ each node storage, and each link latency and bandwidth profile\footnote{To simulate network changes in a lifelike manner, latency increases are accompanied by bandwidth decreases (\textit{QoS degradation}) and, vice versa, latency decreases are accompanied by bandwidth increases (\textit{QoS improvement}).} change by a uniformly sampled factor of $\pm 15\%$, and
    \item \textit{image requirements variations}: with a probability of $0.1$, each image storage requirement change by a uniformly sampled factor of $\pm 5\%$.
\end{itemize}

\textcolor{black}{Link and node attributes are directly affected (depending on the type of change), while node-to-node metrics are re-computed as we sketched in the previous experiment and are indirectly affected by the simulated network changes. Thus, each simulation step yields a network infrastructure and image requirements that slightly differ from the previous configuration.}
%
%
\textcolor{black}{On each step of the simulation, we run \proimg (according to the workflow in Figure~\ref{fig:flow}) and the vanilla version of our ASP enconding to collect their respective execution times. Continuous reasoning Prolog steps are executed with 6-second timeouts, while ASP-based reasoning steps are executed with a timeout of 25 seconds (both in \proimg and in the vanilla version), as per our previous experiment.} 

\medskip\noindent\textbf{\textit{Results}} \cref{fig:d4images}, \cref{fig:d8images}, and \cref{fig:d12images} show the results of the three simulations we run, aggregating average execution times and placement costs at varying infrastructure sizes and for 4, 8 and 12 images, respectively. 

As per \cref{fig:d4images}, \proimg and ASP show similar performances both in terms of execution times and obtained placement costs, with \proimg slightly worse on both metrics. Note that, despite this, the optimal ASP solution guarantees a cost-optimal placement but does not consider the implicit cost paid to discard (a possibly large part of) the current image placement (intuitively, the ``cost of moving images around''). On the other hand, continuous reasoning attempts to contain this cost by identifying and migrating only those images that do not currently satisfy their transfer time and storage requirements, in view of evolving infrastructure conditions.

On the other hand, \cref{fig:d8images} already shows the benefits of continuous reasoning in reducing decision-making times for infrastructures of at least 50 nodes. Indeed, as per \cref{fig:d8images_time}, \proimg shows an average consistent reduction in the range of 1.4--1.9 seconds, corresponding to a reduction in the execution times that ranges from  11\% (150 nodes) to 21\% (75 nodes). For what concerns cost, \cref{fig:d8images_cost} shows a slight increase in the range between 3\% (150 nodes) and 19.8\% (75 nodes).

Finally, \cref{fig:d12images} shows how more complex settings can benefit from continuous reasoning on any of the considered infrastructure sizes. Particularly, as per \cref{fig:d12images_time}, \proimg improves on execution times by a factor between 19.2\% (150 nodes) and 85\% (25 nodes). On the other hand, cost increases by a factor between 3.9\% (150 nodes) and 23.9\% (25 nodes).

\medskip\noindent\textbf{\textit{Discussion}} Summing up, we obtained the following experimental evidence:

\begin{itemize}
    \item \textit{execution times lower than ASP}: as for problem instances with at least 8 images and 75 nodes, \proimg always performs better than ASP by reducing execution times of a portion between $11\%$ and $85\%$ and is inversely proportional to the size of the target infrastructure (i.e., the larger the infrastructure, the lower the gain in terms of execution times),
    \item \textit{limited cost increase}: as for problem instances with at least 8 images and 50 nodes, the advantages in terms of execution times brought by \proimg correspond to a cost increase between 3\% and 24\%, also inversely proportional to the size of the target infrastructure, 
    \item \textit{avoidance of unnecessary migrations}: since continuous reasoning focuses on computing feasible placements by migrating only images that violate constraints in Definition~\ref{def:eligiblePlacement}, which turns out in a considerable reduction in terms of execution times when the current network conditions still support the previously enacted placement. 
\end{itemize}

\noindent Overall, the above results show that interleaving ASP with continuous reasoning based on iterative deepening achieves a positive trade-off between execution times and the cost of the found solution placement.

\begin{figure*}[htb]
\begin{subfigure}{.5\textwidth}
  \centering
  \includegraphics[width=\linewidth]{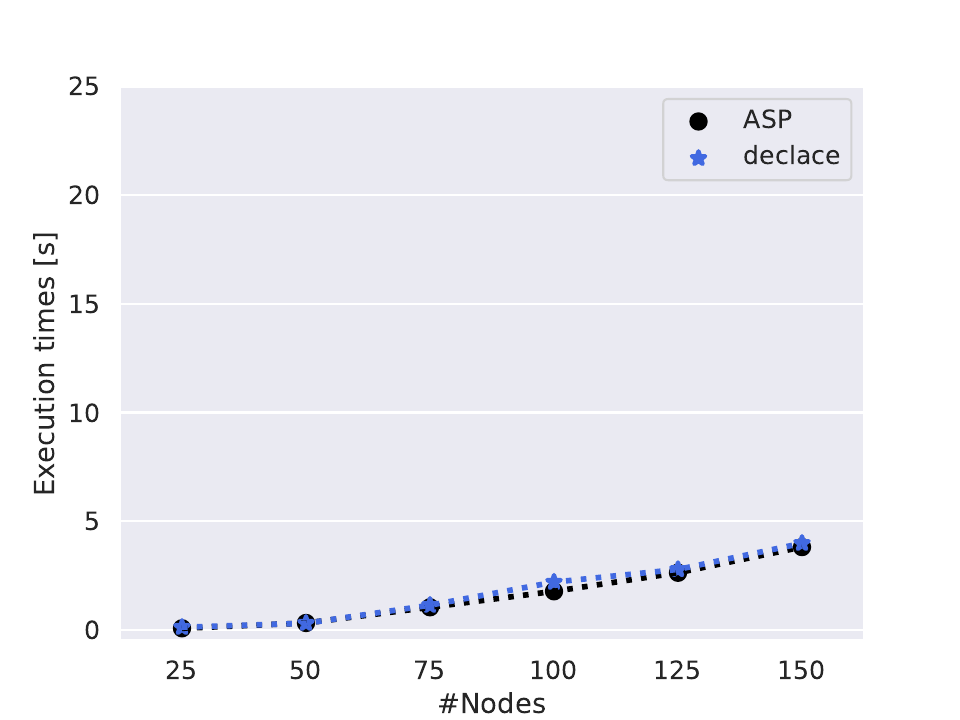}
  \caption{execution times}
  \label{fig:d4images_time}
\end{subfigure}%
\begin{subfigure}{.5\textwidth}
  \centering
  \includegraphics[width=\linewidth]{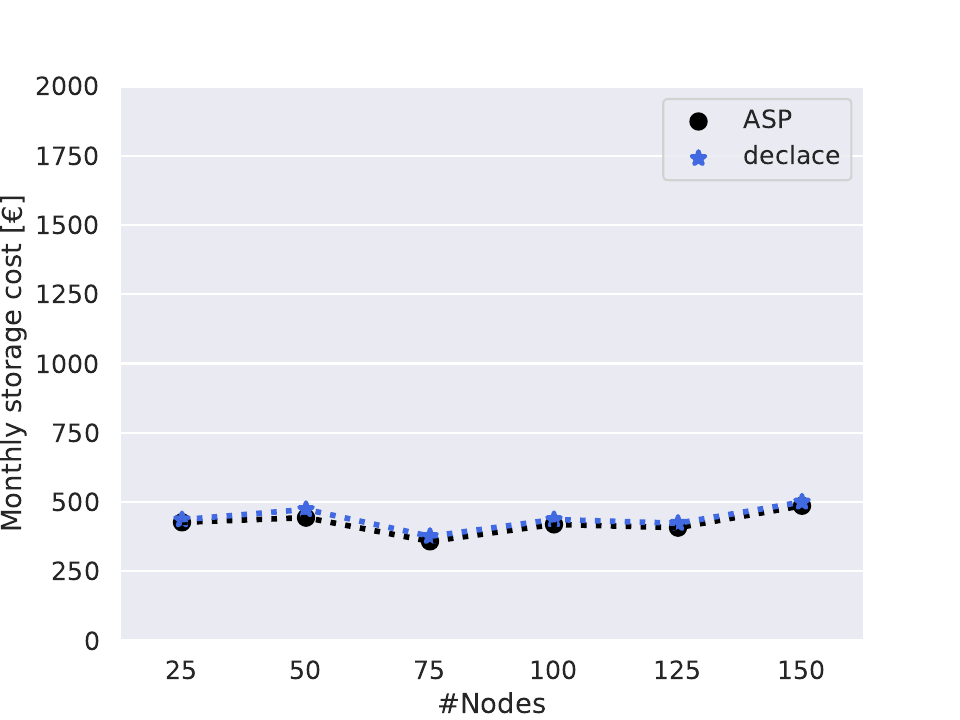}
  \caption{costs}
  \label{fig:d4images_cost}
\end{subfigure}
\caption{\proimg versus ASP at varying number of nodes (4 images).}
\label{fig:d4images}
\end{figure*}
\begin{figure*}[htb]
\begin{subfigure}{.5\textwidth}
  \centering
  \includegraphics[width=\linewidth]{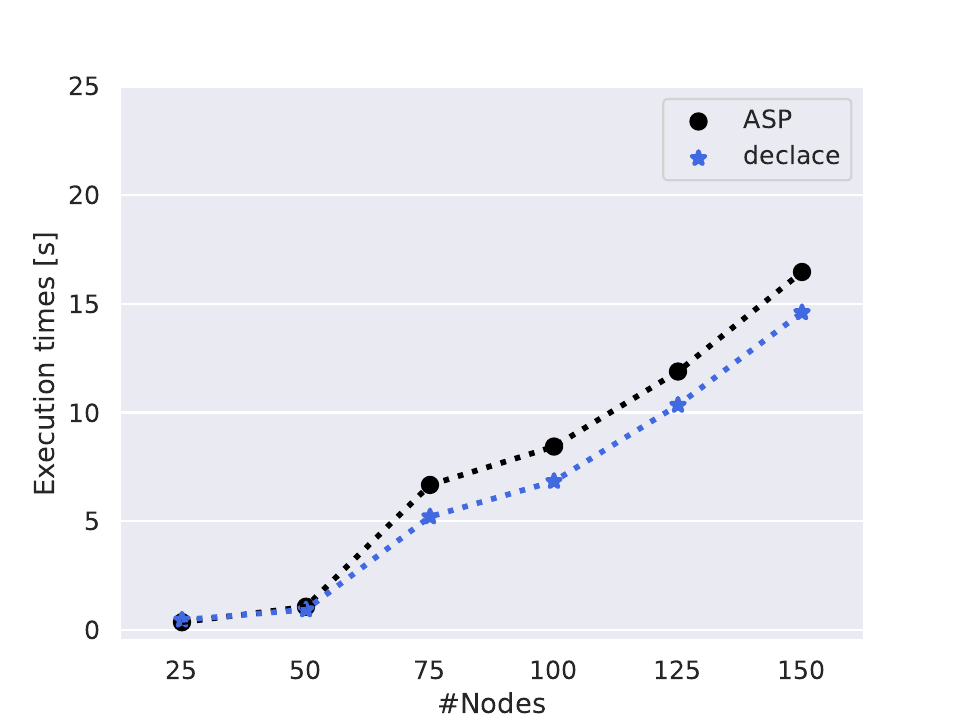}
  \caption{execution times}
  \label{fig:d8images_time}
\end{subfigure}%
\begin{subfigure}{.5\textwidth}
  \centering
  \includegraphics[width=\linewidth]{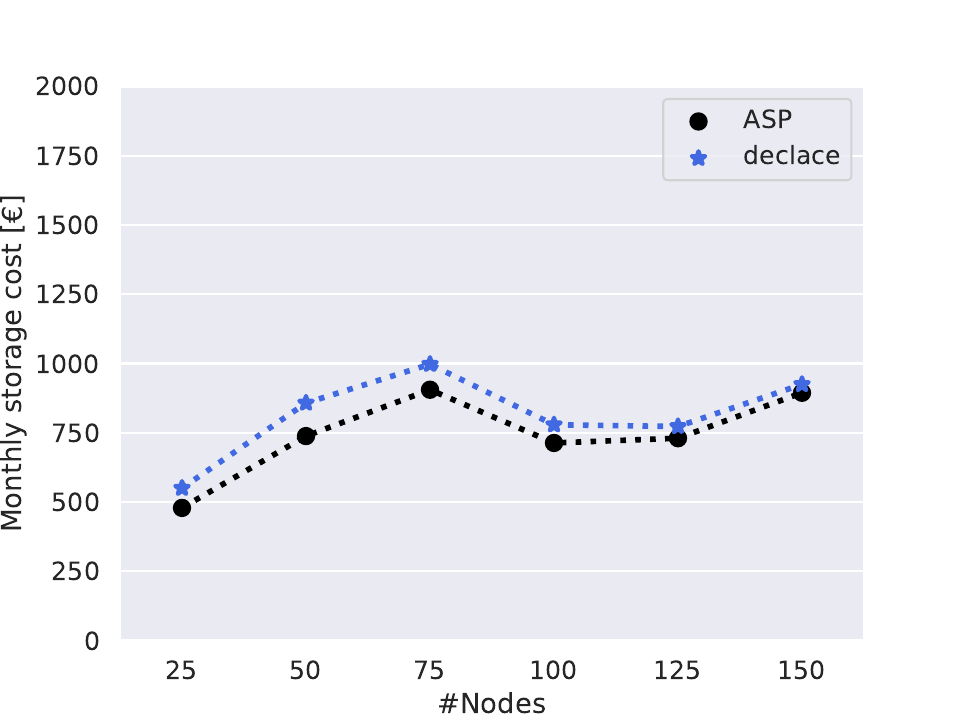}
  \caption{costs}
  \label{fig:d8images_cost}
\end{subfigure}
\caption{\proimg versus ASP at varying number of nodes (8 images).}
\label{fig:d8images}
\end{figure*}
\begin{figure*}[htb]
\begin{subfigure}{.5\textwidth}
  \centering
  \includegraphics[width=\linewidth]{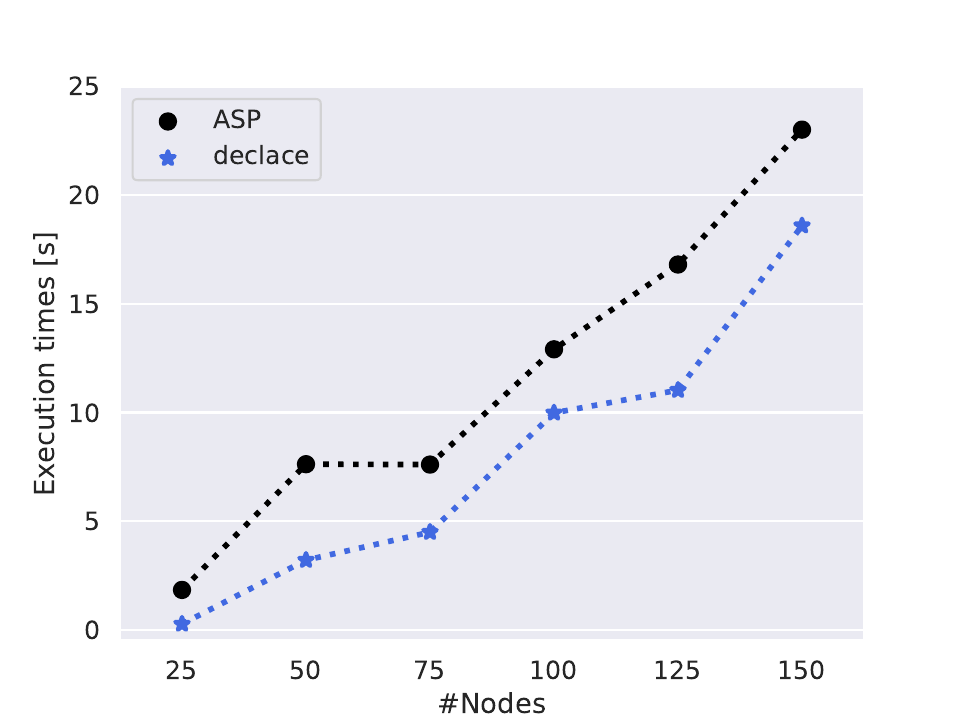}
  \caption{execution times}
  \label{fig:d12images_time}
\end{subfigure}%
\begin{subfigure}{.5\textwidth}
  \centering
  \includegraphics[width=\linewidth]{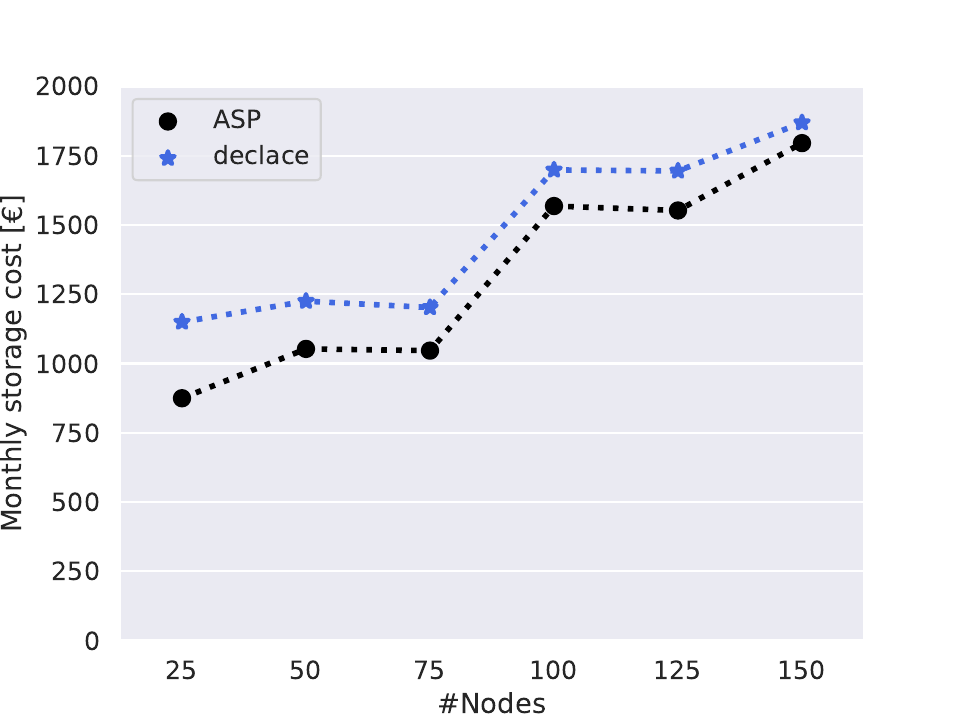}
  \caption{costs}
  \label{fig:d12images_cost}
\end{subfigure}
\caption{\proimg versus ASP at varying number of nodes (12 images).}
\label{fig:d12images}
\end{figure*}

\vspace{-0.5cm}
\section{Related Work}
\label{sec:related}


Despite being interesting and challenging, the problem of distributing container images in Cloud-Edge landscapes has only been investigated in a few recent works, viz.,~\cite{DBLP:conf/ipccc/Becker0K21, DBLP:conf/icccn/DarrousLI19,DBLP:conf/europar/GazzettiRKC17,CLOUD2023,zhang2021container}, none of which considers runtime image placement adaptation.
Indeed, most efforts have been devoted to the (online or offline) problem of placing multiservice applications onto specific target nodes in a context- and QoS-aware manner as surveyed, for instance, in~\cite{DBLP:journals/csur/MahmudRB20}. Among these, Forti \textit{et al.}~\cite{DBLP:journals/fgcs/FortiGB21} and Herrera \textit{et al.}~\cite{DBLP:journals/computing/HerreraBFBM23}   proposed to exploit  continuous reasoning to speed up decision-making at runtime, when selecting suitable placements for multiservice applications. 

Focussing on Docker image distribution in cloud datacentres, Kangjin \textit{et al.}~\cite{DBLP:conf/saso/KangjinYYHL17} proposed a peer-to-peer extension of the Docker Image Registry system. 
Similarly, yet targeting cloud-edge computing settings, the authors of~\cite{DBLP:conf/ipccc/Becker0K21} and \cite{DBLP:conf/europar/GazzettiRKC17} devise fully decentralised container registries based on local agents compliant with the Docker registry API. Also Zhang \textit{et al.}~\cite{zhang2021container} propose a container image distribution architecture based on peer-to-peer pre-caching of images onto alive edge nodes. Although missing QoS-, cost- and context-aware strategies to decide on container image placement, the above works show that distributing images around the network can lead to substantially lower container pull times, improved image availability, and reduced network traffic. \rev{They might constitute the enabling technology for \proimg{} to distribute images.}

Darrous \textit{et al.}~\cite{DBLP:conf/icccn/DarrousLI19} first investigated the problem of QoS-aware image placement in cloud-edge settings by proposing two placement strategies based on k-center optimisation. While their problem model is similar to ours, they do not consider storage costs and only focus on reducing container retrieval times. They employ a Python-based solver for k-center problems and propose sorting images to be placed by decreasing size as we do. They assess their proposal against a best-fit and a random strategy over networks up to 50 nodes.
Finally, Theodoropoulos \textit{et al.}~\cite{CLOUD2023} formulate the image placement problem as a minimum vertex cover problem and solve it through a reinforcement learning approach. They assess their proposal against a greedy strategy. 
Execution times of the proposed reinforcement strategies are two orders of magnitude slower than those of the greedy strategy but obtain on average a better vertex cover. 

Differently from \proimg{}, both~\cite{DBLP:conf/icccn/DarrousLI19} and \cite{CLOUD2023} focus on initial image placement and do not consider the possibility of adapting such placement in response to runtime infrastructural or image repository changes, which might turn useful in ever-changing cloud-edge landscapes, working in continuity with CI/CD release pipelines. 
To the best of our knowledge, our work is the first employing continuous reasoning for adaptively solving the cloud-edge image placement problem at runtime while reducing execution times, by focussing only on those images that need to be re-allocated. Thanks to its declarative nature, our approach features better readability, maintainability and extensibility than procedural solutions like \cite{DBLP:conf/icccn/DarrousLI19}, and better 
interpretability and incrementality of the obtained solutions, which \rev{can be} difficult to obtain when relying on learning approaches, like \cite{CLOUD2023}.

\section{Concluding Remarks}
\label{sec:conclusions}


\rev{In this article, we have formally modelled the (optimal and continuous versions of the) problem of adaptive placement of container images in cloud-edge setting and proved its decisional version is NP-hard.} Then, we have proposed a declarative pipeline to solve such a problem in a QoS-~, context- and cost-aware manner. We have implemented our methodology through logic programming in the open-source prototype \proimg{}, which combines ASP to determine cost-optimal initial placements and Prolog for adapting image placements at runtime via continuous reasoning. We have thoroughly assessed our prototype by simulating its functioning over lifelike data.
The incremental approach of continuous reasoning reduces the size of the problem instances to be solved and the number of image migrations. This makes runtime decision-making faster, also avoiding unneeded data transfers. The synergistic use of Prolog and ASP balances initial cost optimisation with prompt runtime adaptation. The usage of logic programming to define our methodology makes it more concise ($\simeq$ 100 lines of code), readable and extensible than imperative solutions. 
Future work on this line may include:

\begin{itemize}
    \item the \textit{extension} of \proimg{} \rev{to account for image layers, dependencies between non-base images, and associated transfer delays,} and with with probabilistic reasoning~\cite{AzzRigLam21-AIJ-IJ} to assess candidate placements against the uncertainty of infrastructure variations, so to improve the resilience of found placements against such variations (as in \cite{DBLP:journals/tplp/FortiPB22}),
    \item the \textit{combination} of our methodology \textit{with deep learning} solutions for container image placement (like~\cite{CLOUD2023}) to extend them with interpretability, explainability and incrementality features and to equip \proimg{} with prediction features, eventually improving decisions in the spirit on neuro-symbolic artificial intelligence~\cite{neuroceocosymbAI}, and
    \item further experimental assessment over real data about the dynamics of cloud-edge networks (currently unavailable) and the integration of \proimg with a \textit{distributed container repository} to be assessed in \textit{testbed} settings with real images to manage.
\end{itemize}

\bibliography{biblio}

\end{document}